\documentclass[11pt]{article}

\usepackage[T1]{fontenc}
\usepackage[sc]{mathpazo}

\usepackage{mathtools} 
\usepackage{amsthm}
\usepackage{amssymb}
\usepackage{thmtools, thm-restate}

\usepackage[margin=1in]{geometry}
\usepackage{setspace}
\usepackage{relsize}
\usepackage{caption}
\usepackage{tcolorbox}
\usepackage{textcomp}

\setcounter{tocdepth}{2}
\usepackage{appendix}
\usepackage[textsize=tiny]{todonotes}

\usepackage[
    backend=biber,
    style=alphabetic,
    url=false,
    doi=false,
    isbn=false,
    sorting=nyt,
    sortcites=true,
    backref=true,
    maxbibnames=10,
    maxcitenames=3,
    mincitenames=1,
]{biblatex}
\addbibresource{ref.bib}

\usepackage[
    labelfont=sf,
    font=it,
    size=small,
    hypcap=false,
    format=hang,
    margin=1cm,
    justification=justified,
    calcwidth=0.8\linewidth
]{caption}

\usepackage{xcolor}
\usepackage{tikz}
\usetikzlibrary{matrix, fit, positioning, arrows.meta, decorations.pathreplacing, calc}

\usepackage{bookmark}
\usepackage{hyperref}
\hypersetup{
    hidelinks,
    colorlinks,
    linkcolor=[rgb]{0.3,0.3,0.6},
    citecolor=[rgb]{0.2,0.6,0.2},
    urlcolor=[rgb]{0.6,0.2,0.2}
}

\linespread{1.25}
\DefineBibliographyStrings{english}{%
  backrefpage  = {cited on page},
  backrefpages = {cited on pages}
}


\newcommand{\myhref}[1]{%
  \ifboolexpr{%
    test {\ifhyperref}
    and
    not test {\iftoggle{bbx:url}}
    and
    not test {\iftoggle{bbx:doi}}
  }
    {\href{\doiorurl}{#1}}
    {#1}%
}

\DeclareFieldFormat{title}{\myhref{\mkbibemph{#1}}}
\DeclareFieldFormat
  [article,inbook,incollection,inproceedings,patent,thesis,unpublished]
  {title}{\myhref{\mkbibquote{#1\isdot}}}

\newif\ifcomments
\commentsfalse  
\ifcomments
  \renewcommand{\comment}[2]{\marginpar{\tiny{\textbf{#1: }\textit{#2}}}}
  \newcommand{\important}[1]{{\color{red}#1}}
  \newcommand{\note}[1]{{\color{blue}#1}}
\else
  \renewcommand{\comment}[2]{}
  \newcommand{\important}[1]{}
  \newcommand{\note}[1]{}
\fi


\numberwithin{equation}{section}
\declaretheoremstyle[bodyfont=\normalfont \itshape \upshape,qed=\qedsymbol]{noproofstyle}
\declaretheoremstyle[bodyfont=\normalfont \itshape]{thmstyle} 

\declaretheorem[numberlike=equation, style=thmstyle]{theorem}

\declaretheorem[name=Theorem,style=thmstyle,numbered=no]{theorem*}

\declaretheorem[numberlike=equation,style=thmstyle]{lemma}
\declaretheorem[name=Lemma,numbered=no,style=thmstyle]{lemma*}

\declaretheorem[numberlike=equation,style=thmstyle]{corollary}
\declaretheorem[name=Corollary,numbered=no,style=thmstyle]{corollary*}

\declaretheorem[numberlike=equation,style=thmstyle]{proposition}
\declaretheorem[name=Proposition,numbered=no,style=thmstyle]{proposition*}

\declaretheorem[name=Claim,numbered=no,style=thmstyle]{claim*}

\declaretheorem[name=Conjecture,numbered=no,style=thmstyle]{conjecture*}

\declaretheorem[numberlike=equation,style=thmstyle]{definition}
\declaretheorem[unnumbered,name=Definition,style=thmstyle]{definition*}

\declaretheorem[numberlike=equation,style=thmstyle]{observation}
\declaretheorem[name=Observation,numbered=no,style=thmstyle]{observation*}

\declaretheorem[numberlike=equation,style=thmstyle]{fact}
\declaretheorem[name=Fact,numbered=no,style=thmstyle]{fact*}

\declaretheoremstyle[]{defstyle}

\declaretheorem[numberlike=equation,style=remark]{remark}
\declaretheorem[unnumbered,name=Remark,style=remark]{remark*}

  {\vspace{1ex}\noindent{\em Proof.}\hspace{0.5em}\def\myproof@name{#1}}%
  {\hfill{\tiny \qed\ (\myproof@name)}\vspace{1ex}}
\newenvironment{proof-sketch}{\medskip\noindent{\em Proof Sketch.}}{\qed\bigskip}
\newenvironment{proof-attempt}{\medskip\noindent{\em Proof attempt.}}{\bigskip}

\newcommand{\inparen }[1]{\left(#1\right)}             
\newcommand{\inbrace }[1]{\left\{#1\right\}}           


\newcommand{\setdef}[2]{\inbrace{{#1}\ \mid \ {#2}}}      



\newcommand{\F}{\mathbb{F}}

\newcommand{\N}{\mathbb{N}}

\newcommand{\Z}{\mathbb{Z}}

\newcommand{\C}{\mathbb{C}}


\newcommand{\poly}{\operatorname{poly}}
\newcommand{\rank}{\operatorname{rank}}

\newcommand{\evalDim}{\operatorname{evalDim}}

\newcommand{\sym}{\mathsf{sym}}



\newcommand{\Det}{\operatorname{\mathsf{Det}}}

\newcommand{\ESym}{\operatorname{\mathsf{ESym}}}

\newcommand{\coef}{\operatorname{coef}}

\newcommand{\veca}{\mathrm{\boldsymbol{a}}}

\newcommand{\vece}{\mathrm{\boldsymbol{e}}}

\newcommand{\vecu}{\mathrm{\boldsymbol{u}}}
\newcommand{\vecv}{\mathrm{\boldsymbol{v}}}
\newcommand{\vecw}{\mathrm{\boldsymbol{w}}}
\newcommand{\vecx}{\mathrm{\boldsymbol{x}}}
\newcommand{\vecy}{\mathrm{\boldsymbol{y}}}
\newcommand{\vecz}{\mathrm{\boldsymbol{z}}}


\newcommand{\calM}{\mathcal{M}}





\newcommand{\ABP}{\mathsf{ABP}}

\newcommand{\RO}{\mathsf{roABP}}

\renewcommand{\epsilon}{\varepsilon}
\renewcommand{\tilde}{\widetilde}



\renewcommand{\epsilon}{\varepsilon}
\newcommand{\ignore}[1]{}


\sloppy

\title{On Closure Properties of Read-Once Oblivious Algebraic Branching Programs}

\author{
  Jules Armand\footnotemark[3] \\ \texttt{jules.armand@univ-smb.fr}
  \and
  Prateek Dwivedi\footnotemark[1] \\ \texttt{prdw@itu.dk}
  \and
  Magnus Rahbek Dalgaard Hansen\footnotemark[1] \\ \texttt{ramh@itu.dk}
  \and
  Nutan Limaye\thanks{IT University of Copenhagen, Denmark. This work was funded by the Independent Research Fund Denmark (grant agreement No.\ 10.46540/3103-00116B) and is also supported by the Basic Algorithms Research Copenhagen (BARC), funded by VILLUM Foundation Grant 54451.} \\ \texttt{nuli@itu.dk}
  \and
  Srikanth Srinivasan\thanks{Department of Computer Science, University of Copenhagen, Denmark. This work was funded by the European Research Council (ERC) under grant agreement no.\ 101125652 (ALBA).} \\ \texttt{srsr@di.ku.dk}
  \and
  S\'{e}bastien Tavenas\thanks{Universit\'{e} Savoie Mont Blanc, CNRS, LAMA. This work was supported by ANR project VONBICA (ANR-22-CE48-0007).} \\ \texttt{sebastien.tavenas@univ-smb.fr}
}

\date{}

\begin{document}
\maketitle

\begin{abstract}
\noindent We investigate the closure properties of read-once oblivious Algebraic Branching Programs (roABPs) under various natural algebraic operations and prove the following.
\begin{itemize}
    \item \textbf{Non-closure under factoring:} There is a sequence of explicit polynomials $(f_n(x_1,\ldots, x_n))_n$ that have  $\poly(n)$-sized roABPs such that some irreducible factor of $f_n$ does not have roABPs of superpolynomial size in \emph{any} order. 
    \item\textbf{Non-closure under powering:} There is a sequence of polynomials $(f_n(x_1,\ldots, x_n))_n$ with $\poly(n)$-sized roABPs such that any super-constant power of $f_n$ does not have roABPs of polynomial size in any order (and $f_n^n$ requires exponential size in any order).
    \item \textbf{Non-closure under symmetric compositions:} There are symmetric polynomials $(f_n(e_1,\ldots, e_n))_n$ that have roABPs of polynomial size such that $f_n(x_1,\ldots, x_n)$ do not have roABPs of subexponential size. (Here, $e_1,\ldots, e_n$ denote the elementary symmetric polynomials in $n$ variables.) 
\end{itemize}
These results should be viewed in light of known results on models such as algebraic circuits, (general) algebraic branching programs, formulas and constant-depth circuits, all of which are known to be closed under these operations. 

To prove non-closure under factoring, we construct hard polynomials based on expander graphs using gadgets that lift their hardness from sparse polynomials to roABPs. For symmetric compositions, we show that the \emph{circulant} polynomial requires roABPs of exponential size in every variable order.
\end{abstract}

\section{Introduction}

Given any computational model, it is natural to study the closure properties of the model with respect to simple operations. In Boolean complexity, these simple operations typically take the form of Boolean operations such as union, intersection, complement etc. In the setting of \emph{algebraic complexity}, the object of computation is a multivariate polynomial $f\in \mathbb{F}[x_1,\ldots, x_n]$. Here, it is intuitive to consider closure properties under algebraic operations.

In this paper, we study the closure properties of a very well studied model of algebraic computation, namely \emph{read-once oblivious Algebraic Branching programs} (roABPs). The interest in this model stems from the fact that it is both expressive enough to capture many natural algorithmic paradigms while at the same time possible to analyze using standard `complexity measures'~\cite{Nisan1991}. 

In particular, $\RO$s can efficiently compute several polynomials of interest, including elementary symmetric polynomials and iterated matrix multiplication, the latter being provably hard to compute for constant-depth circuits~\cite{LST2021, BDS2022, NW1994}. In addition, $\RO$s subsume well-studied models such as sparse polynomials, set-multilinear and diagonal depth-3 circuits~\cite{KNS2020}, as well as polynomials with large partial derivative dimension~\cite{BT2024}. On the other hand, this is also one of the few  models where we have a perfect characterization of the complexity of any given polynomial (in the form of the rank of an associated matrix) and where we also have a perfect understanding of border complexity \cite{Forbes16}. As a consequence, this model has played a central role in research on lower bounds, polynomial identity testing algorithms and `debordering' results~\cite{DDSdeborder21,DS2022, BIMPS2020}.


We study the closure properties of this model under basic algebraic operations such as factorization, powering, and inversion under composition with an important algebraic map (the elementary symmetric polynomial map). Apart from being natural questions about any computational model, such investigations have played a vital role in understanding hardness-randomness tradeoffs~\cite{KI2004,DSY2009,CKS2019,BKRRSS2025} and the complexity of basic algebraic problems such as the Resultant and GCD~\cite{AW2024,BKRRSSgcd2025} in other algebraic models.

\subsection{Main Results}
\label{sec:main-results}

In contrast with what is known for other models, our results are mostly negative. Specifically, we show the following.

\paragraph{$\RO$ factor non-closure.} Our first main result shows that there are explicit polynomial sequences that have small roABPs but with an irreducible factor that has $\RO$ complexity super-polynomial in $n$. 
Specifically, we prove $\RO$ complexity lower bound for a \emph{root}, which is an irreducible factor of the form $x_n-f(x_1,\ldots, x_{n-1})$, even when the $\RO$ is allowed to scan the variables in any order.
The formal statement is as follows.

\begin{restatable}[$\RO$ factor non-closure]{mainthm}{roabpnonclosure}
\label{thm:roabp-non-closure}
    The following holds over any field.
    Let $n \in \N$ be a parameter and $d \geq n$.
    There exists an $n$-variate polynomial $f$ of degree $d$ computable by an $\RO$ of width $w \coloneqq 2^{O(n)}$, such that one of its (irreducible) factors $g$ requires an $\RO$ of width $w^{\Omega(\log d)}$ in every variable order.
\end{restatable}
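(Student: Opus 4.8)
The plan is to write down an explicit \emph{sparse} polynomial $f$ --- which therefore has a small $\RO$ in every order --- one of whose roots $x_n-h$ is forced to have enormous $\RO$-width, and to route the hardness of $h$ through \emph{every} variable order by building $h$ out of an edge-expander. (So the factor called $g$ in the statement will be $x_n-h$, which is automatically irreducible.)

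\textbf{Construction and the upper bound.} Fix a $\Delta$-regular edge-expander $G=(V,E)$ on $V=\{1,\dots,n-1\}$ with $\Delta$ a large enough absolute constant; then $|E|=O(n)$, and since constant-degree expanders have constant edge expansion, every balanced cut $(S,V\setminus S)$ has $|E(S,V\setminus S)|=\Omega(n)$. Put $d'\coloneqq\Theta(d/n)$, calibrated so that the polynomial below has total degree $d$, and for $e=\{i,j\}\in E$ write $u_e\coloneqq x_ix_j$. I would take
\[
A\coloneqq\prod_{e\in E}(1-u_e),\qquad B\coloneqq\prod_{e\in E}(1-u_e^{\,d'})=A\cdot h,\qquad h\coloneqq\prod_{e\in E}\bigl(1+u_e+\cdots+u_e^{\,d'-1}\bigr),
\]
and set $f\coloneqq A\,x_n-B=A\,(x_n-h)$. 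Since $h\in\F[x_1,\dots,x_{n-1}]$, the factor $x_n-h$ is primitive of $x_n$-degree $1$, hence irreducible over any field; this is the hard factor. For the upper bound: $A$ and $B$ are each a product of $|E|=O(n)$ sums of monomials, so each is a polynomial with at most $2^{|E|}=2^{O(n)}$ monomials, and hence $f$ has at most $2^{O(n)}$ monomials; as a polynomial with $s$ monomials has an $\RO$ of width $s$ in every order, $f$ has $\RO$-width $w=2^{O(n)}$. (One may equally well just exhibit the sparse polynomial $B$, which $h$ divides; the ``root'' formulation is merely cleaner.)

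\textbf{The lower bound, via all balanced bisections.} Here I would use the exact characterization: in any variable order $\pi$ the $\RO$-width equals $\brank_\pi$, the maximum over prefixes $P$ of $\pi$ of the rank of the coefficient matrix whose rows (columns) are indexed by monomials in the variables of $P$ (of its complement). Fix an arbitrary $\pi$ and let $P$ be the prefix containing exactly $\lceil(n-1)/2\rceil$ of the variables $x_1,\dots,x_{n-1}$; this induces a balanced cut $(S,V\setminus S)$ of $G$. By expansion $|E(S,V\setminus S)|=\Omega(n)$, and since $G$ has constant degree a greedy selection extracts a matching $\{a_1,b_1\},\dots,\{a_\mu,b_\mu\}\subseteq E(S,V\setminus S)$ of size $\mu=\Omega(n)$ with all $2\mu$ endpoints distinct, $a_\ell\in S$, $b_\ell\notin S$. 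Now restrict every variable of $h$ other than the $x_{a_\ell},x_{b_\ell}$ to generic values (in a large enough extension of $\F$) chosen so that none of the collapsed factors $1+u_e+\cdots$ vanishes --- possible since each is a nonzero polynomial; this restriction can only decrease $\brank_\pi$, and it leaves $\widehat h=c\cdot\prod_{\ell=1}^{\mu}\bigl(1+x_{a_\ell}x_{b_\ell}+\cdots+(x_{a_\ell}x_{b_\ell})^{\,d'-1}\bigr)$ with $c\neq0$. The coefficient matrix of $\widehat h$ for the partition $(\{x_{a_\ell}\}_\ell,\{x_{b_\ell}\}_\ell)$ has rows and columns both indexed by $\{0,\dots,d'-1\}^{\mu}$ and $(\alpha,\beta)$-entry $c\prod_\ell[\alpha_\ell=\beta_\ell]$: it is $c$ times the $(d')^{\mu}\times(d')^{\mu}$ identity matrix. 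Hence $\brank_\pi(h)\ge(d')^{\mu}$, and since the coefficient matrix of $x_n-h$ contains $-(\text{that of }h)$ as a submatrix (the $x_n$-degree-$0$ block), also $\brank_\pi(x_n-h)\ge(d')^{\mu}=(d')^{\Omega(n)}$, which is $w^{\Omega(\log d)}$.

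\textbf{What is hard, and a caveat.} The upper bound and the identity-matrix computation are routine; the substance is the lower bound \emph{simultaneously for every order}. The expander is exactly what makes this possible: a variable order is, for our purposes, a balanced bisection of the variables, and by expansion no bisection can avoid a linear-sized crossing matching of $G$; each matched edge then contributes an independent $d'\times d'$ identity block to the coefficient matrix, so the ranks multiply to $(d')^{\Omega(n)}$. Two points I would take care with. First, the generic restriction must be chosen over an \emph{arbitrary} (possibly small finite) field; I would handle this by passing to an extension and using that matrix rank is unchanged under field extension, together with the fact that each factor $1+u_e+\cdots+u_e^{d'-1}$ has all coefficients equal to $1$ and so is never the zero polynomial. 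Second, the arithmetic relating $d'$ to $d$: since $|E|=\Theta(n)$ and the degree budget forces $d'=\Theta(d/n)$, one has $\log d'=\Theta(\log d)$ --- hence the clean bound $w^{\Omega(\log d)}$ --- precisely when $d$ is at least polynomially larger than $n$; for $d$ close to $n$ one records the (still superpolynomial-in-$w$) bound $\bigl(\Theta(d/n)\bigr)^{\Omega(n)}$ directly.
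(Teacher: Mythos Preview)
Your approach is essentially the paper's: plant the variables of the von zur Gathen--Kaltofen sparse example onto the edges of a constant-degree expander so that every balanced cut has many crossing edges, extract a matching across the cut, and read off a large identity submatrix of the Nisan matrix. The paper also makes the hard factor irreducible by adjoining a fresh linear variable (it writes $Q_G+z$ rather than your $x_n-h$, which is cosmetic).

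There is, however, a real gap in your restriction step. After you keep only the $2\mu$ endpoints $\{x_{a_\ell},x_{b_\ell}\}$ and set the remaining variables to generic values, the resulting polynomial is \emph{not} $c\cdot\prod_\ell\bigl(1+x_{a_\ell}x_{b_\ell}+\cdots\bigr)$: every edge of $G$ incident to at least one kept vertex still contributes a nontrivial factor. Edges with exactly one kept endpoint give univariate factors, and edges inside $S$ or inside $V\setminus S$ between two kept vertices give factors lying purely in $\F[Y]$ or $\F[Z]$; these are harmless, since multiplying by a nonzero $p(Y)$ or $q(Z)$ does not change the Nisan rank. But your simple greedy (pick a crossing edge, delete crossing edges sharing an endpoint) does \emph{not} preclude a crossing edge $\{a_\ell,b_k\}\in E$ with $\ell\neq k$, and such an edge leaves a genuine $Y$--$Z$ factor $\bigl(1+x_{a_\ell}x_{b_k}+\cdots\bigr)$ for which you give no argument that the identity-matrix rank bound survives. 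The paper closes exactly this gap by extracting an \emph{induced} matching across the cut (via the Expander Mixing Lemma) and setting the unused variables to $0$ rather than to generic values; then every non-matching factor collapses to the constant $1$ and the formula you wrote becomes literally correct. You can repair your argument the same way, or alternatively strengthen the greedy to also discard crossing edges touching any \emph{neighbour} of the selected endpoints --- this still leaves $\Omega(n/\Delta^2)=\Omega(n)$ matching edges and eliminates all stray $Y$--$Z$ factors.
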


Note that, an $\RO$ computing an $n$-variate polynomial by definition has only $n$ layers. Hence, the size and the width of an $\RO$ are polynomially related. Secondly, the size and width parameters in the theorem above are not polynomial in the number of variables, but they can be easily made polynomial by padding with some additional `dummy' variables. In particular, one should think of $n$ above as logarithmic in the number of `actual' variables and $d$ as a growing parameter, up to a polynomial in the number of variables.

This is in contrast to other algebraic models such as algebraic circuits~\cite{Kal1989, KT1990}, branching programs~\cite{ST2021}, formulas and constant-depth circuits~\cite{BKRRSS2025}, all of which satisfy the property that factors of a polynomial $f$ have complexity comparable to that of $f$. An exception to this rule is the family of sparse polynomials~\cite{GK1985}, and our construction is based on `lifting' this example to the setting of $\RO$.

\paragraph{$\RO$ complexity of Symmetric Composition.} We study an analogue of the result of Bl\"{a}ser and Jindal~\cite{BJ2019} for $\RO$. More specifically, a classical result in the theory of symmetric functions says that any symmetric polynomial\footnote{A polynomial is symmetric if it is invariant under any permutation of its variables.} $f_{\mathrm{sym}}(x_1,\ldots, x_n)$ can be written as a unique polynomial combination $f$ of the elementary symmetric polynomials $\ESym_n^1,\ldots, \ESym_n^n$, where $\ESym_n^d$ is the $n$-variate elementary symmetric polynomial of degree $d$. Looking for a computational analogue of this theorem, Lipton and Regan~\cite{LR2009}, asked: what is the complexity of $f_{\mathrm{sym}}$ vis-à-vis that of $f$?

Bl\"{a}ser and Jindal~\cite{BJ2019} showed that the complexity of $f$ and $f_{\mathrm{sym}}$ are polynomially related in the algebraic circuit model. Recently, the work of Bhattacharjee, Kumar, Rai, Ramanathan, Saptharishi and Saraf~\cite{BKRRSSgcd2025} extended this result to formulas and constant-depth circuits to show that fundamental computations such as GCD, resultants and discriminants have efficient constant-depth circuits in any characteristic. This generalizes a similar result of Andrews and Wigderson~\cite{AW2024} in characteristic~$0$.

We show in this paper that the $\RO$ complexity of a polynomial $f$ and its symmetric counterpart $f_{\sym}$ can differ significantly. Taking $f_{\mathrm{sym}} = \sum_{d = 0}^{n} \ESym_n^d(x_1^k,\ldots, x_n^k)$, we can show that $f_{\mathrm{sym}}$ is easy but $f$ is exponentially hard.

\begin{restatable}{mainthm}{symforward}
\label{thm:sym-forward}
    The following holds over fields of characteristic zero.
    Let $n \in \N$ be a parameter.
    There exists an $n$-variate polynomial $f$ such that the symmetric polynomial $f_{\sym} \coloneqq f(\ESym_n^1, \ldots, \ESym_n^n)$ is computable by an $\RO$ of constant width in every variable order, but any $\RO$ computing $f$ in any variable order must have width $2^{\Omega(n)}$.
\end{restatable}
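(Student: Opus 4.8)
The plan is to realise the polynomial $f$ essentially as the \emph{circulant polynomial} and prove an $\RO$ lower bound for the latter in every variable order. Set $k := n$ in the construction above, so that $f_{\sym} = \sum_{d=0}^n \ESym_n^d(x_1^n,\dots,x_n^n) = \prod_{i=1}^n(1+x_i^n)$. Being a product of univariate polynomials in distinct variables, $f_{\sym}$ is computed in \emph{every} variable order $\sigma$ by a width-$1$ $\RO$ (the $i$-th layer multiplies by $1+x_{\sigma(i)}^n$), which settles the easy half. For the hard half we identify $f$, the unique polynomial with $f(\ESym_n^1,\dots,\ESym_n^n)=f_{\sym}$: using $1+x^n=\prod_{\zeta^n=-1}(x-\zeta)$ over $\overline{\Q}$ together with $\prod_i(\zeta-x_i)=\sum_{d=0}^n(-1)^d\ESym_n^d\,\zeta^{\,n-d}$, one gets $f(z_1,\dots,z_n)=\pm\prod_{\zeta^n=-1}\bigl(\zeta^n+\sum_{d=1}^n(-1)^d z_d\zeta^{\,n-d}\bigr)$, a product of $n$ affine-linear forms whose linear parts carry ``DFT coefficients''. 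Rescaling each $z_d$ by a suitable root of unity and performing an affine substitution in the single variable $z_n$ — neither of which changes the $\RO$ width in any order — turns this into the circulant polynomial $C_n(z_0,\dots,z_{n-1})=\det\bigl(\mathrm{Circ}(z_0,\dots,z_{n-1})\bigr)=\prod_{j=0}^{n-1}\bigl(\sum_{\ell=0}^{n-1}\omega^{j\ell}z_\ell\bigr)$ with $\omega$ a primitive $n$-th root of unity. So it remains to show that every $\RO$ for $C_n$, in every order, has width $2^{\Omega(n)}$.

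I would use the exact characterisation of $\RO$ width: in an order $\sigma$ it is at least $\evalDim_S(C_n)$, the rank of the coefficient (partial-derivative) matrix, where $S$ is the length-$\lfloor n/2\rfloor$ prefix of $\sigma$; so it suffices to prove $\evalDim_S(C_n)=2^{\Omega(n)}$ for \emph{every} balanced bipartition $(S,\bar S)$ (assume $n$ even, the odd case being analogous). Split $L_a=L_a^S+L_a^{\bar S}$ with $L_a^S=\sum_{\ell\in S}\omega^{a\ell}z_\ell$ and $L_a^{\bar S}=\sum_{\ell\in\bar S}\omega^{a\ell}z_\ell$, so that $C_n=\sum_{T\subseteq[n]}\bigl(\prod_{a\notin T}L_a^S\bigr)\bigl(\prod_{a\in T}L_a^{\bar S}\bigr)$. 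Restrict the coefficient matrix to the block (degree-$(n/2+1)$ $S$-monomials) $\times$ (degree-$(n/2-1)$ $\bar S$-monomials); by degree bookkeeping only $|T|=n/2-1$ contributes, and this block factors as $M'=A^{\top}B$, where $A$ has rows indexed by the $(n/2-1)$-subsets $T$ and records the coefficients of the products $\prod_{a\notin T}L_a^S$, while $B$ has the same rows and records the coefficients of the products $\prod_{a\in T}L_a^{\bar S}$. Hence $\rank M'\ge\rank A+\rank B-\binom{n}{n/2-1}$.

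Now $\rank A=\dim\vecspan\{\prod_{a\in U}L_a^S:|U|=n/2+1\}$: these are $\binom{n}{n/2-1}$ products living in the space of degree-$(n/2+1)$ forms in the $n/2$ variables $z_S$, a space of \emph{exactly} the same dimension $\binom{n}{n/2-1}$, so if the products are linearly independent then $\rank A$ is full. Likewise $\rank B=\dim\vecspan\{\prod_{a\in T}L_a^{\bar S}:|T|=n/2-1\}$ involves $\binom{n}{n/2-1}$ products in a space of dimension $\binom{n-2}{n/2-1}<\binom{n}{n/2-1}$, so if they span it then $\rank B=\binom{n-2}{n/2-1}$. Substituting, $\rank M'\ge\binom{n-2}{n/2-1}=2^{\Omega(n)}$, as required. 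The two inputs — that products of the right-sized subsets of $\{L_a^S\}$ (resp.\ $\{L_a^{\bar S}\}$) attain the expected (generic) dimension — should follow from the general principle that products of sufficiently generic linear forms have the largest possible span; the forms here are Vandermonde-type, $L_a^S$ having coefficient vector $(\omega^{a\ell})_{\ell\in S}$, of which any $n/2$ are linearly independent.

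The main obstacle is exactly this genericity claim for the \emph{specific} DFT configuration, and for \emph{every} sub-configuration obtained by keeping an $(n/2)$-subset of the columns of the $n\times n$ DFT matrix — a square submatrix of which need not even be invertible, so some care is genuinely needed. Concretely it reduces to the non-vanishing of a family of generalised Vandermonde-type determinants with roots-of-unity entries; I expect this to be the delicate step, to be handled either by a direct cyclotomic argument or by a specialisation/semicontinuity reduction to the known behaviour of products of generic linear forms. By contrast, the identification of $f$ with $C_n$, the width-$1$ bound for $f_{\sym}$, and the factorisation $M'=A^{\top}B$ are routine; the characteristic-zero hypothesis is used only to guarantee that the relevant roots of unity exist and are distinct. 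Assembling the two halves then yields the theorem.
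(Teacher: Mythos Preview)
Your construction and the easy direction match the paper's: the example is the circulant polynomial, and $f_{\sym}=\prod_i(1+x_i^k)$ has a width-$1$ $\RO$ in every order. The gap is in the lower bound. You set $k=n$ and then need, for \emph{every} balanced cut $S$, that products of prescribed-size subsets of the forms $\{L_a^S\}_a$ attain the generic rank. This fails outright: for $n=4$ and $S=\{0,2\}$ one has $L_a^S=z_0+(-1)^a z_2$, so only two distinct forms occur among $L_0^S,\ldots,L_3^S$, and all four $3$-fold products live in a $2$-dimensional space --- your claim $\rank A=\binom{n}{n/2-1}$ is false there. More generally, whenever $n$ is composite one can take $S$ to be (a shift of) a nontrivial subgroup of $\Z/n\Z$ and force repetitions among the $L_a^S$. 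So the ``genericity'' step you flag as delicate is not a technicality to be patched later; for your choice $k=n$ it is simply false for some orders, and even for prime $k$ your product-span claims do not follow from linear independence of the $L_a^S$ alone.

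The paper instead takes $k$ to be a \emph{prime} in $[n/2,n]$ and replaces the rank-factorisation by an explicit evaluation argument. Primality enables Chebotarev's theorem: every square submatrix of the $k\times k$ DFT matrix is nonsingular. After restricting to $k$ variables and fixing any order, let $U$ be the length-$(k-1)/2$ prefix; Chebotarev gives a set $A$ of $(k-1)/2$ factor-indices with $\{\ell_j|_U\}_{j\in A}$ a basis, and for each $B\subseteq A$ a second application of Chebotarev produces an evaluation $\alpha_B$ of the remaining variables making the affine part of factor $j$ vanish iff $j\in B$. The partial evaluations $f(\,\cdot\,,\alpha_B)$ then have pairwise distinct lowest-degree monomials $\prod_{j\in B}u_j$ and are therefore linearly independent, giving $\evalDim_{U,V}(f)\ge 2^{(k-1)/2}=2^{\Omega(n)}$. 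This bypasses the product-span questions entirely; if you want to rescue your route, you would at minimum have to restrict to prime $k$ and then actually prove the two rank claims, which is strictly more work than the paper's direct construction.
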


In the other direction, our next result shows that even if a polynomial $f$ is easy to compute by $\RO$, its symmetric counterpart $f_{\sym}$ can still be hard for $\RO$—--once again in sharp contrast to the known results for circuits, formulas, and constant-depth circuits. Specifically, the lower bound for a power of the elementary symmetric polynomial yields an example where $f$ is easy but $f_{\mathrm{sym}}$ is exponentially hard.\footnote{This is an especially strong contrast to the other models where it is trivial to show that if $f$ is easy, then so is $f_{\mathrm{sym}}$.} 

\begin{restatable}{mainthm}{symbackward}
\label{thm:sym-backward}
    The following holds over fields of characteristic zero.
    Let $n \in \N$ be a parameter.
    There exists an $n$-variate polynomial $f$ computable by an $\RO$ of constant width such that its respective symmetric polynomial $f_{\sym} = f(\ESym_n^1, \ldots, \ESym_n^n)$ requires an $\RO$ of width $2^{\Omega(n)}$ in every variable order.
\end{restatable}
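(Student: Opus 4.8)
The plan is to let $f$ be a single \emph{monomial}. Set $d \coloneqq \floor{n/2}$ and $f(y_1,\dots,y_n) \coloneqq y_d^{\,n}$; as a polynomial in $y_1,\dots,y_n$ this is one monomial, so it is computed by an $\RO$ of width $1$ in every variable order. On the other hand,
\[
  f_{\sym} \;=\; f(\ESym_n^1,\dots,\ESym_n^n) \;=\; \inparen{\ESym_n^d}^{\,n}.
\]
Thus the theorem reduces to showing that the $n$-th power of the \emph{middle} elementary symmetric polynomial requires $\RO$ width $2^{\Omega(n)}$ in every order. (This is exactly the lower bound underlying $\RO$ non-closure under powering: $\ESym_n^d$ itself has an $\RO$ of width $d+1$ in every order, but its $n$-th power blows up exponentially.)

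For the lower bound I would invoke the standard characterization of $\RO$ width: if $P$ has an $\RO$ of width $w$ in order $\sigma$, then for every prefix $S$ of $\sigma$ the coefficient matrix $M_{P,S}$ — rows indexed by monomials in the variables of $S$, columns by monomials in the remaining variables — has rank at most $w$. Since $P \coloneqq \inparen{\ESym_n^d}^n$ is symmetric, for any order I take $S$ to be the length-$d$ prefix; permuting variables permutes the rows and columns of $M_{P,S}$, so $\rank M_{P,S}$ depends only on $|S| = d$, and it suffices to bound it for the balanced split $X \coloneqq \{x_1,\dots,x_d\}$, $Y \coloneqq \{x_{d+1},\dots,x_{2d}\}$ (I assume $n = 2d$; the odd case is identical up to floors). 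Writing $\ESym_n^d = \sum_{i=0}^{d}\ESym_d^i(X)\,\ESym_d^{d-i}(Y)$, and using that $\ESym_d^1,\dots,\ESym_d^d$ are algebraically independent while $\operatorname{char}\F = 0$ makes $\F$ infinite (so that the evaluation dimension of $P$ across the split is unchanged upon replacing the values $\ESym_d^i(X),\ESym_d^j(Y)$ by fresh variables), $\rank M_{P,S}$ equals the rank of the coefficient matrix of
\[
  F(\vecu,\vecv) \;=\; \inparen{\sum_{i=0}^{d} u_i v_{d-i}}^{n}, \qquad u_0 = v_0 = 1,
\]
with rows indexed by monomials in $u_1,\dots,u_d$ and columns by monomials in $v_1,\dots,v_d$.

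The crux is a direct evaluation of this rank. Expanding the $n$-th power, each monomial occurring in $F$ is a product of a monomial $\prod_j u_j^{c_j}$ in $\vecu$ and a monomial in $\vecv$; the decisive point is that the $\vecu$-part \emph{determines} the $\vecv$-part and vice versa, because a factor $u_j$ is always created together with a factor $v_{d-j}$ and the number of "slack" factors (those equal to $u_0 v_d = v_d$, or $u_d v_0 = u_d$) is pinned down by the total degree $n$. Moreover the coefficient of such a pair is a positive multinomial coefficient (nonzero since $\operatorname{char}\F = 0$). Hence, after permuting rows and columns, the coefficient matrix of $F$ is diagonal with nonzero diagonal, so its rank equals the number of $\vecu$-monomials occurring, namely
\[
  \#\{(c_1,\dots,c_d)\in\N^d : c_1+\cdots+c_d \le n\} \;=\; \binom{d+n}{d} \;=\; \binom{3n/2}{n/2} \;=\; 2^{\Omega(n)}.
\]
Combined with the two reductions above, this shows $f_{\sym}$ needs $\RO$ width $2^{\Omega(n)}$ in every order.

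The monomial bookkeeping in the last step and the $\RO$-width/coefficient-rank characterization are routine. The step I expect to demand the most care is the reduction to $F$: one must argue precisely that evaluation dimension is invariant under substituting the algebraically independent elementary symmetric polynomials for the $u_i$ and $v_j$ — i.e.\ that their joint image is Zariski-dense, which uses $\F$ infinite — and one must verify that the \emph{balanced} choice (splitting $n = 2d$ variables into two halves of size $d$, matching the degree $d$ of the elementary symmetric polynomial being powered) is exactly what makes the pairing $u_i \leftrightarrow v_{d-i}$ a bijection on the occurring monomials. The hypothesis $\operatorname{char}\F = 0$ genuinely enters twice: for infiniteness in the density argument, and to keep the multinomial coefficients from vanishing.
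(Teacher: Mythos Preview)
Your proposal is correct and follows essentially the same approach as the paper: choose $f$ to be a single power of one variable so that $f_{\sym}$ becomes a power of the middle elementary symmetric polynomial, then lower-bound the coefficient-matrix rank across a balanced split via the convolution identity $\ESym_n^d(X,Y)=\sum_i \ESym^i(X)\ESym^{d-i}(Y)$ together with the algebraic independence of the elementary symmetric polynomials in characteristic~$0$. The only cosmetic differences are that the paper takes the exponent to be $k=\lfloor n/2\rfloor$ rather than $n$ (giving $\binom{n}{n/2}$ instead of $\binom{3n/2}{n/2}$), and it packages the rank computation via the linear-algebra fact ``$M=\sum u_iv_i^T$ with both families independent $\Rightarrow \rank M=r$'' rather than your equivalent diagonal-matrix description.
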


\paragraph{$\RO$ non-closure corollaries.} We also investigate the power of roABPs in relation to powering an efficiently computable polynomial. It is well-known that constant powers of such polynomials also have small roABPs (see e.g.~\cite[Lemma 2.5]{AFSSV2018}). However, we show that for larger powers, a superpolynomial blow-up in width is unavoidable.

\begin{restatable}[$\RO$ powering non-closure]{maincor}{powering}
\label{cor:power-roabp}
    The following holds over fields of characteristic zero.
    There exists an $n$-variate polynomial $f$ computable by an $\RO$ of width $O(n)$ such that for any $d$, any $\RO$ computing $f^d$ requires width at least $\binom{d + n/2}{n/2}$ in every variable order.
\end{restatable}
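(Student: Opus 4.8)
The plan is to take $f = \ESym_n^{n/2}$, the middle elementary symmetric polynomial in $n$ variables. It is computed, in every variable order (by symmetry), by the standard $\RO$ of width $n/2 + 1 = O(n)$ whose state after reading a prefix records how many of the already-read variables have been ``selected,'' capped at $n/2$. So the real content is the claimed width lower bound for $f^d = (\ESym_n^{n/2})^d$, and that is where all the work goes. I would invoke the standard characterization of $\RO$ width via coefficient dimension \cite{Nisan1991}: for a fixed order it suffices to lower bound, at the cut after the first $n/2$ variables, the rank of the coefficient matrix of $f^d$ with respect to the induced split $[n] = Y \sqcup Z$ (rows indexed by monomials in $\vecy$, columns by monomials in $\vecz$, entry equal to the corresponding coefficient of $f^d$). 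Thus it is enough to prove that for every balanced bipartition $[n] = Y \sqcup Z$, $|Y| = |Z| = n/2$, this rank is at least $\binom{d+n/2}{n/2}$; this uniformity over bipartitions — which comes for free from the symmetry of $\ESym_n^{n/2}$ — is exactly what upgrades the bound to ``every order.''

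Fix a bipartition and set $u_j := \ESym_{n/2}^j(\vecy)$, $v_j := \ESym_{n/2}^j(\vecz)$ for $0 \le j \le n/2$, with $u_0 = v_0 = 1$. The splitting identity for elementary symmetric polynomials gives $\ESym_n^{n/2} = \sum_{j=0}^{n/2} u_j v_{n/2-j}$, a sum of $n/2 + 1$ terms, so expanding the $d$-th power multinomially yields $f^d = \sum_{a}\binom{d}{a}\, p_a(\vecy)\, q_a(\vecz)$, where $a = (a_0,\dots,a_{n/2})$ ranges over $\Z_{\ge 0}^{n/2+1}$ with $|a| = d$, and $p_a = \prod_{i=1}^{n/2} u_i^{a_i}$, $q_a = \prod_{i=1}^{n/2} v_i^{a_{n/2-i}}$. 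Now I would use two facts. First, $u_1,\dots,u_{n/2}$ are algebraically independent in $\F[\vecy]$ (they freely generate the ring of symmetric polynomials in $n/2$ variables), hence distinct monomials in them are $\F$-linearly independent; and on the slice $|a| = d$ the tuple $(a_1,\dots,a_{n/2})$ determines $a$, so the $\binom{d+n/2}{n/2}$ polynomials $\{p_a\}$ are pairwise distinct and $\F$-linearly independent, and symmetrically for $\{q_a\}$ in $\F[\vecz]$. Second, in characteristic zero every $\binom{d}{a}$ is a nonzero field element. Consequently $f^d$, read as a tensor in $\F[\vecy]\otimes\F[\vecz]$, is a sum of $\binom{d+n/2}{n/2}$ rank-one tensors with linearly independent left factors, linearly independent right factors and nonzero coefficients, so it has rank exactly $\binom{d+n/2}{n/2}$ — which is precisely the rank of the coefficient matrix.

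Putting the pieces together gives the $\RO$ width lower bound $\binom{d+n/2}{n/2}$ at the midpoint cut of an arbitrary order, proving the corollary; and instantiating $d = n/2$, while writing the base as $f = z_{n/2}^{n/2}$ evaluated at $(\ESym_n^1,\dots,\ESym_n^n)$ (a width-$O(1)$ $\RO$), recovers Theorem~\ref{thm:sym-backward} since $\binom{n}{n/2} = 2^{\Omega(n)}$. I expect the main obstacle to be the rank computation of the middle paragraph, i.e.\ certifying that no collapse occurs in the coefficient matrix: this hinges on combining the algebraic independence of the partial elementary symmetric polynomials, the injectivity of the indexing by $a$, and — for the nonvanishing of the multinomial coefficients — the characteristic-zero hypothesis. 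Everything else (the upper bound on $f$, the reduction to a single cut, the uniformity over bipartitions) should be routine.
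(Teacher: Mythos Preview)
Your proposal is correct and is essentially the same argument as the paper's: the paper also takes $f=\ESym_n^{n/2}$, splits $e_{n/2}(X)=\sum_{t}e_t(Y)e_{n/2-t}(Z)$ at a cut with $|Y|=n/2$, expands $f^d$ multinomially, and then combines the algebraic independence of the partial elementary symmetric polynomials with the nonvanishing of multinomial coefficients in characteristic zero to conclude that the Nisan matrix has rank exactly $\binom{d+n/2}{n/2}$ (your tensor-rank sentence is exactly the paper's \autoref{fac:lin-alg}). The paper phrases this as a slightly more general lemma for $(\ESym_n^k)^d$ with $k\le n/2$ and then specializes, and it likewise derives \autoref{thm:sym-backward} by writing the base as $x_k^k$ composed with the $e_i$'s.
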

\begin{remark*}
    We give two example polynomials to prove the hardness of powering for $\RO$. The first is the elementary symmetric polynomial (this lower bound will also prove \autoref{thm:sym-backward}) and the second is a quadratic polynomial inspired by the proof of \autoref{thm:roabp-non-closure}.
\end{remark*}

Another corollary of \autoref{thm:sym-backward} is that computing the resultant and the discriminant is hard for $\RO$.

\begin{restatable}[$\RO$ discriminant non-closure]{maincor}{discriminant}
\label{cor:discriminant-hard}
    The following holds over fields of characteristic zero. For all $n$, there exists an $n$-variate polynomial $f(\vecx,y)$ computable by an $\RO$ of width $O(n)$ such that any $\RO$ computing the discriminant $\operatorname{Disc}_y(f)$ requires width at least $2^{\Omega(n)}$ in every variable order.
\end{restatable}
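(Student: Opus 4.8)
The plan is to reduce the discriminant lower bound to the powering lower bound already captured by \autoref{thm:sym-backward} (equivalently, to the elementary-symmetric-power lower bound underlying \autoref{cor:power-roabp}). The classical identity to exploit is that for a monic univariate polynomial $p(y) = \prod_{i=1}^{m}(y - \alpha_i)$, the discriminant is $\operatorname{Disc}_y(p) = \prod_{i < j}(\alpha_i - \alpha_j)^2$, and more generally $\operatorname{Disc}_y(p)$ equals (up to sign and leading-coefficient factors) the resultant $\operatorname{Res}_y(p, p')$. The key observation is that one can engineer a low-width $\RO$-computable bivariate-type family $f(\vecx, y)$ whose discriminant in $y$ is exactly (a shift/scaling of) a large power of the elementary symmetric polynomial in $\vecx$ — or, taking the second example from the Remark, a large power of a simple quadratic form. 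Concretely, I would look for $f$ of the shape $f(\vecx, y) = y^2 \cdot A(\vecx) + y \cdot B(\vecx) + C(\vecx)$ where $A, B, C$ are chosen so that $B^2 - 4AC = \big(\ESym_n^{t}(\vecx)\big)^{d}$ (or a power of a quadratic), with $A, B, C$ themselves having constant-width roABPs; then $f$ has an $\RO$ of width $O(n)$ by closure of $\RO$ under the few additions and products involved, while $\operatorname{Disc}_y(f) = B^2 - 4AC$ inherits the $2^{\Omega(n)}$ width lower bound in every order from \autoref{thm:sym-backward} / \autoref{cor:power-roabp}.

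The steps, in order, are as follows. First, recall/record the algebraic fact $\operatorname{Disc}_y(a y^2 + b y + c) = b^2 - 4ac$ (valid in characteristic zero), so that the problem is reduced to realizing a hard target polynomial $h(\vecx)$ as $b^2 - 4ac$ with $a,b,c$ easy for $\RO$. Second, choose the hard target: take $h = f_{\sym}$ from \autoref{thm:sym-backward} with its guaranteed $2^{\Omega(n)}$ lower bound, or — following the proof of \autoref{cor:power-roabp} — take $h = (\ESym_n^{n/2})^d$ for a suitable super-constant $d$, which requires width $\binom{d+n/2}{n/2} = 2^{\Omega(n)}$. Third, exhibit the witnessing decomposition: the simplest route is $a = 1$, $c = 0$ is useless, so instead set $a = 1$, $b = 0$, $c = -h/4$ — but then we need $h/4$ itself $\RO$-easy, which defeats the purpose; so instead pick $h$ of the form $h = g^2 - 4\cdot 1\cdot (-r)$ naturally, i.e. start from a product structure: if $h$ is already a perfect power $u^{2d}$ times something, write $b = u^{d}\cdot(\text{easy})$. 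Cleanest is: let $u(\vecx)$ be $\ESym_n^{n/2}$, which is $\RO$-easy, set $b = u^{d}$ — but $u^d$ is exactly the hard object, not an easy one. So the decomposition must instead be multiplicative in $y$: take $f(\vecx,y) = \prod_{i}(y - u_i(\vecx))$ a genuine product over the layers of a width-$O(n)$ ABP, whose $y$-discriminant $\prod_{i<j}(u_i - u_j)^2$ telescopes/collapses to a power of a single easy polynomial. Fourth, verify the roABP upper bound for $f$: $f(\vecx, y)$ is a product of $O(n)$ affine-in-$y$ forms with $\RO$-easy coefficients, hence has an $\RO$ of width $O(n)$ (reading $y$ last, or interleaved appropriately). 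Fifth, invoke the lower bound: since $\operatorname{Disc}_y(f)$ equals the hard polynomial $h$ exactly (not merely up to an $\RO$-easy factor — or, if up to such a factor, use that $\RO$ width is not decreased by dividing out an $\RO$-easy coprime factor, via the characterization by coefficient-matrix rank), any $\RO$ for $\operatorname{Disc}_y(f)$ in any order has width $2^{\Omega(n)}$.

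The main obstacle is the third step: finding an explicit, clean choice of $f(\vecx,y)$ — simultaneously $\RO$-easy and with a genuinely hard $y$-discriminant — so that no $\RO$-easy "contaminating" factors appear in $\operatorname{Disc}_y(f)$ that could spoil the reduction. The safest design is to force $f$ to split as $f = (y - \alpha(\vecx))(y - \beta(\vecx))$ with $\alpha, \beta$ built so that $(\alpha - \beta)^2$ is precisely the power of $\ESym$ (or of the quadratic from \autoref{thm:roabp-non-closure}) that \autoref{thm:sym-backward} / \autoref{cor:power-roabp} certifies hard — e.g. $\alpha - \beta = (\ESym_n^{n/2})^{d/2}$ handled by squaring, while $\alpha + \beta$ and $\alpha\beta$ stay $\RO$-easy. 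One then checks, using the coefficient-matrix (Nisan-type) characterization of $\RO$ width recalled from \cite{Nisan1991, Forbes16}, that $\operatorname{Disc}_y(f) = (\alpha-\beta)^2$ carries the full lower bound in every variable order, and that the handful of sums and products needed to assemble $f$ from constant-width pieces keeps its width $O(n)$. A secondary technical point is confirming that $\operatorname{Disc}_y$ as used matches the resultant identity up to a power of the leading coefficient of $f$ in $y$, which we keep equal to $1$ by choosing $f$ monic in $y$, so no correction factor intervenes.
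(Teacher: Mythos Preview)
Your proposal has a genuine gap: none of the candidate constructions you sketch can actually deliver the $2^{\Omega(n)}$ lower bound, and the one you call ``safest'' is provably incapable of it. If $f(\vecx,y) = a(\vecx)\,y^2 + b(\vecx)\,y + c(\vecx)$ with $a,b,c$ each computable by an $\RO$ of width $O(n)$, then $\operatorname{Disc}_y(f) = b^2 - 4ac$ is a sum of two products of width-$O(n)$ $\RO$s, hence itself has $\RO$ width at most $O(n^2)$ in the same order --- nowhere near $2^{\Omega(n)}$. This is exactly why your attempts to pick $a,b,c$ kept collapsing: you cannot make $b^2-4ac$ hard while keeping $a,b,c$ easy, because the map $(a,b,c)\mapsto b^2-4ac$ increases $\RO$ width only polynomially. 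Your two-root version $f=(y-\alpha)(y-\beta)$ with $(\alpha-\beta)^2$ hard suffers the same fate: the coefficients of $f$ are $\alpha+\beta$ and $\alpha\beta$, and once those are width-$O(n)$, so (up to polynomial factors) is $(\alpha+\beta)^2-4\alpha\beta=(\alpha-\beta)^2$. You briefly mention a longer product $\prod_i(y-u_i)$ but never specify the $u_i$ or check that $f$ stays easy; in general the coefficients of such a product are elementary symmetric polynomials in the $u_i$, which need not have small $\RO$s.

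The missing idea is that $f$ must have \emph{large} degree in $y$ --- specifically $d=\Omega(n)$ --- so that the discriminant is a degree-$\Theta(d)$ expression in the coefficients and can realize a $d$-th power. The paper does this with the single clean choice $f(\vecx,y)=y^d-g(\vecx)\,y$, where $g$ is the width-$O(n)$ polynomial from \autoref{cor:power-roabp}. This $f$ has only two nonzero $y$-coefficients (namely $1$ and $-g$), so it trivially has an $\RO$ of width $O(n)$. Its roots are $0$ and the $(d-1)$-th roots of $g$, and a short computation via \autoref{ob:res-product} gives $\operatorname{Disc}_y(f)=-(d-1)^{d-1}g^{d}$; the powering lower bound then finishes. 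So your reduction-to-powering instinct is exactly right, but the vehicle has to be a sparse-in-$y$, high-degree-in-$y$ polynomial rather than a quadratic.
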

\begin{remark*}
    As an immediate consequence of the corollary above, we get that $\RO$ is not closed under taking resultants.
\end{remark*}

\paragraph{Related Work.} There have been many lines of investigation into roABPs from the point of view of lower bounds~\cite{Nisan1991, KNS2020, AFSSV2018}, PIT algorithms~\cite{RS2005,BS2021,AGKS2015,GG2020, FS2013, FSS2014, AFSSV2018, AGKS2015, GKST2017, BG2022, ST2024}, border complexity~\cite{DDSdeborder21,DS2022, BIMPS2020, BDI2021}, algebraic meta-complexity~\cite{BDGT2025, BT2024} and so on. 

Our work is closely related to that of Kayal, Nair, and Saha~\cite{KNS2020}, who proved separations between the power of $\RO$s and multilinear depth-$3$ circuits. Non-closure results of a similar flavour to ours have also been proved by Saha and Thankey~\cite[Appendix E.1]{ST2024}. They construct explicit families of polynomials that require $\RO$ of exponential size, but arise from applying invertible linear transformations to sparse polynomials $f$ that have linear $\RO$ complexity. Some of their ideas, such as those involving the use of expander graphs, also appear in our work.

Similar separations between roABPs and other models (such as read-twice ABPs) were also addressed in the work of Anderson, Forbes, Saptharishi, Shpilka and Volk~\cite{AFSSV2018}. We re-prove a result from this work separating depth-$2$ algebraic circuits (products of linear polynomials) from roABPs in order to understand the roABP complexity of some explicit symmetric functions. Our lower bound is proved for the specific case of the determinant of a \emph{Circulant matrix}, which is a naturally occurring mathematical object and hence may be independently interesting.

\subsection{Proof Techniques} 

The main technique for understanding the roABP complexity of a polynomial is a characterization due to Nisan~\cite{Nisan1991}, who showed that the roABP complexity of a polynomial $f$ (or more precisely the \emph{width} of the smallest roABP computing $f$) in a given order is captured by the ranks of certain matrices related to $f$, also known as the \emph{evaluation dimension of $f$} (formally defined in \autoref{def:eval-dim}). We also heavily rely on this notion in our work.

\paragraph{Factor non-closure.} To construct our examples of polynomials that are efficiently computable by roABPs but hard to factor, we start with an analogous construction for a weaker setting, that of \emph{sparse polynomials.} The following is a well-known construction due to~\cite[Example 5.1]{GK1985}. 
\[
f(x_1,\ldots, x_n) \;=\; \prod_{i=1}^n(x_i^d - 1) = \prod_{i=1}^n (x_i-1) \cdot \underbrace{\prod_{i=1}^n(1+x_i+\cdots + x_i^{d-1})}_{g(x_1,\ldots, x_n)}.
\]
Note that the polynomial $f$ has $2^n$ monomials while its factor $g$ has $d^n$ monomials. This thus yields an example of a polynomial whose factors have many more monomials than the polynomial itself.

We would like to extend this to the setting of $\RO$. Unfortunately, the example above does not work as is, as the polynomial $g$ is a product of univariate polynomials and hence has a small roABP. Our idea is to `lift' this sparsity lower bound to an roABP lower bound. 

The basic idea of lifting, which has proven powerful in the area of Boolean complexity~\cite{RGR2022} and also Algebraic Proof complexity~\cite{FSTW2021}, is to start with a function $f$ that is hard for a simpler computational model (in this case sparse polynomials) and convert it to a function $g$ that is difficult for a much more powerful model by replacing the variables of $f$ by functions (typically called `gadgets') in a small number of new variables to obtain $g.$ A version of this idea can be used to lift degree lower bounds on the multilinear representation for some functions to lower bounds for algebraic proof systems based on roABPs~\cite{FSTW2021}.

Inspired by~\cite{FSTW2021}, we replace the variables of the polynomial $f$ by quadratic multilinear monomials in a new set of variables $y_1,\ldots, y_m$ where $m = \Omega(n).$ We can associate this replacement with an undirected graph $G$ on $m$ vertices and $n$ edges. We show that, \emph{as long as $G$ is a sufficiently good constant-degree expander}, the corresponding `lifted' polynomials $f_G$ and $g_G$ are easy and hard respectively for roABPs with similar parameters to the case of sparse polynomials. 

The crucial property of expander graphs that allows us to prove a lower bound on $g_G$ is the Expander Mixing lemma.
It can be used to show that given any balanced partition of the vertices of $G$, there is a large induced matching between the two sets in the parts. This allows us to find a large identity matrix as a submatrix of the evaluation matrix of $g_G$, leading to strong bounds on its evaluation dimension.

\paragraph{The complexity of powering.} We give two examples to demonstrate that roABPs are not closed under powering.

The first is a quadratic polynomial $g$ whose monomials again correspond to a constant-degree expander graph as in the previous result. The Expander Mixing lemma can again be used to argue that large powers of $g$ have large evaluation dimension.

The second example is just an elementary symmetric polynomial. Symmetric polynomials are particularly natural to study in the setting of roABPs, since the polynomials have the same complexity under any variable ordering. In particular, studying the complexity of a symmetric polynomial turns into understanding the ranks of combinatorially defined matrices. In the setting of a power of the elementary symmetric polynomial, we are able to show that this matrix has large rank.

\paragraph{No Bl\"{a}ser-Jindal type results for roABPs.} Already the example of the elementary symmetric polynomial above shows that for the simple polynomial $f(y_1,\ldots, y_n) = y_i^d$, the symmetric polynomial $f(\ESym^1_n,\ldots, \ESym^n_n)$ is hard to compute for roABPs. 

To prove a converse result, we use the symmetric polynomial ${f_{\mathrm{sym}} \coloneqq \sum_{d=0}^n \ESym^d_n(x_1^k,\ldots, x_n^k)}$. In this case, we need to understand the complexity of the polynomial $f$ (such as \(f(\ESym^1_n,\ldots,\ESym^n_n) = f_{\mathrm{sym}}\)). It turns out that the polynomial $f$ in this case is completely understood~\cite{achille2021} and is closely related to the determinant of the Circulant matrix. To prove the lower bound, we prove an roABP lower bound on this determinant, which we believe is independently interesting.

\paragraph{Outline.} We begin with preliminaries in Section~\ref{sec:prelims}. Section~\ref{sec:factor-non-closure} contains the proof of \autoref{thm:roabp-non-closure}. In Section~\ref{sec:sym-poly-complexity}, we prove \autoref{thm:sym-forward} and \autoref{thm:sym-backward}. Finally, in Section~\ref{sec:corollaries}, we prove all the corollaries.

\section{Notations and Preliminaries}
\label{sec:prelims}

Throughout the paper, we will use a growing parameter $n > 0$ to denote the number of variables in the polynomial. Let $\vecx = \inparen{x_1, \dots, x_n}$ be the set of indeterminates. A monomial of the form $x_1^{e_1} \cdots x_n^{e_n}$ is denoted as $\vecx^\vece$, where $\vece = (e_1, \ldots, e_n) \in \N^n$. The degree of a monomial $\vecx^\vece$ is defined as $\deg(\vecx^\vece) \coloneqq e_1 + \cdots + e_n$. 
The degree of a polynomial $f \in \F[x_1,\dots, x_n]$ is defined as the maximum degree of its constituent monomials.
We use $\coef_{\vecx^\vece}(f)$ to denote the coefficient of the monomial $\vecx^\vece$ in $f$.

\subsection{Read-Once Oblivious Algebraic Branching Programs (\texorpdfstring{$\RO$}{roABP})}

Our model of interest arises as a natural restriction of Algebraic Branching Programs (ABPs), which we describe next.
An \emph{Algebraic Branching Program} (ABP) is a layered and directed graph with a source vertex $s$ and a sink vertex $t$. All edges connect vertices from layer $i$ to $i + 1$. Further, the edges are labeled with affine polynomials over the underlying field $\F$. For every path $\gamma$ from $s$ to $t$, $\operatorname{wt}( \gamma )$ is the product of labels on the edges of the path $\gamma$. The polynomial computed by the ABP is defined as $$f \coloneqq \sum_{\text{path } \gamma: s \rightsquigarrow t} \operatorname{wt}( \gamma ).$$

The \emph{depth} of an $\ABP$ is defined as the number of layers in the graph, and the \emph{width} is the maximum number of nodes in a layer across the graph. The number of vertices used in the graph is the \emph{size} of the $\ABP$. 
The $\RO$ model is a restriction of $\ABP$s, which we define below.
\begin{definition}[$\RO$] \label{def:roabp}
    Let $n \in \N$ be arbitrary and fix a permutation $\pi: [n] \to [n]$. An $\RO$ in the order $\pi$ computing an $n$-variate polynomial $f(\vecx)$ is an $\ABP$ where in the $i$-th layer the edge labels are univariate polynomials  over $x_{\pi(i)}$. 
    
    The size of an $\RO$ is defined as the number of vertices it contains, and the width is defined as the maximum number of vertices in any layer.
\end{definition}

In a foundational work, Nisan~\cite{Nisan1991} characterized the complexity of an $\ABP$ in the non-commutative setting with the rank of certain matrices. Remarkably, the characterization extends to $\RO$ as well. We define the relevant matrix to formally state this characterization.

\begin{definition}[Nisan Matrix]
\label{def:nisan-matrix}
    Consider an $n$-variate polynomial $f(\vecx)$ and a variable partition $Y \sqcup Z = \{x_1, \ldots, x_n\}$. The \emph{Nisan matrix} of $f$ with respect to $Y,Z$, denoted as $M_{Y,Z}(f)$, is the matrix whose rows are indexed by monomials $m_Y$ over $Y$ and whose columns are indexed by monomials $m_Z$ over $Z$. Its entry at $(m_Y, m_Z)$ is defined as $$M_{Y,Z}(f)\bigg[m_Y, m_Z\bigg] \; = \; \coef_{m_Y \cdot m_Z}(f).$$
\end{definition}

Historically, the Nisan Matrix has also been referred to as the coefficient matrix or partial derivative matrix. The width of an $\RO$ computing a polynomial $f$ can be exactly characterized by the rank of the Nisan matrix of $f$~\cite[Lemma 4.5.8]{Forbes2014}.

\begin{theorem}[$\RO$ characterization]
\label{thm:roabp-char}
    Let $f(\vecx)$ be an $n$-variate polynomial, and fix a permutation $\pi$ on variables. 
    For each $i \in [n]$, consider the partition $Y_i \coloneqq \{\pi(x_1), \dots, \pi(x_i)\}$ and $Z_i = \{\pi(x_{i+1}), \dots, \pi(x_n)\}$, and let $M_{Y_i,Z_i}(f)$ denote the corresponding Nisan matrix. 

    \noindent The width of the smallest $\RO$ computing $f$ in the order $\pi$ is exactly $\max_{i \in [n]} \rank(M_{Y_i,Z_i}(f))$. Moreover, the size of the smallest $\RO$ is exactly $\sum_{i \in [n]} \rank(M_{Y_i,Z_i}(f))$.
\end{theorem}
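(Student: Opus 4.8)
The plan is to prove matching inequalities in both directions between the layer‑widths of an $\RO$ in the order $\pi$ and the ranks $r_i \coloneqq \rank\inparen{M_{Y_i,Z_i}(f)}$. Relabelling variables, assume the order is $x_1,\dots,x_n$, set $Y_i = \{x_1,\dots,x_i\}$, $Z_i = \{x_{i+1},\dots,x_n\}$, $M_i \coloneqq M_{Y_i,Z_i}(f)$; note $M_0$ is the single column of $\F$‑coefficients of $f$ and $M_n$ the single row, so $r_0 = r_n = 1$ (assuming $f\neq 0$). For a partial coefficient I write $[m_Z]f \coloneqq \sum_{m_Y}\coef_{m_Y m_Z}(f)\,m_Y \in \F[Y_i]$, i.e. the coefficient of the monomial $m_Z$ when $f$ is regarded as an element of $\F[Y_i][Z_i]$; by \autoref{def:nisan-matrix} the column of $M_i$ indexed by $m_Z$ is exactly the coefficient vector of $[m_Z]f$.

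\emph{Lower bound.} Let $\Phi$ be any $\RO$ for $f$ in this order with $w_i$ vertices in layer $i$. Fixing $i$, for each vertex $v$ in layer $i$ let $L_v \in \F[Y_i]$ collect the weights of paths $s\rightsquigarrow v$ and $R_v \in \F[Z_i]$ the weights of paths $v\rightsquigarrow t$; since the graph is layered, every $s\rightsquigarrow t$ path meets layer $i$ exactly once and its weight factors through that vertex, so $f = \sum_v L_v R_v$. Comparing the coefficient of a split monomial $m_Y m_Z$ and using disjointness of $Y_i,Z_i$ gives $\coef_{m_Y m_Z}(f) = \sum_v \coef_{m_Y}(L_v)\,\coef_{m_Z}(R_v)$, i.e. $M_i = \sum_v \mathbf{u}_v \mathbf{v}_v^{\top}$ with $\mathbf{u}_v,\mathbf{v}_v$ the coefficient vectors of $L_v,R_v$; hence $r_i \le w_i$. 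Taking the maximum over $i$ yields $\mathrm{width}(\Phi) \ge \max_i r_i$, and summing over layers yields $\mathrm{size}(\Phi) \ge \sum_{i\in[n]} r_i$ (the trivial source/sink layers matching on both sides).

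\emph{Upper bound.} For each $i$ let $W_i \subseteq \F[Y_i]$ be the span of $\{[m_Z]f : m_Z \text{ a monomial over } Z_i\}$; by the remark above $\dim W_i = r_i$, and $W_0 = \F\cdot 1$, $W_n = \F\cdot f$. The key structural fact I would establish is that coefficient extraction is \emph{nested}: for each $e\ge 0$ the $\F$‑linear map taking $p\in\F[Y_{i+1}]=\F[Y_i][x_{i+1}]$ to its coefficient of $x_{i+1}^e$ sends $W_{i+1}$ into $W_i$, since applied to a generator $[m_Z]f$ of $W_{i+1}$ ($m_Z$ over $Z_{i+1}$) it yields $[x_{i+1}^e m_Z]f$, a generator of $W_i$ (because $x_{i+1}^e m_Z$ is a monomial over $Z_i$). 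Now fix an ordered basis $(b_{i,1},\dots,b_{i,r_i})$ of each $W_i$ with $b_{0,1}=1$ and $b_{n,1}=f$, and build an $\RO$ with one vertex $v_{i,j}$ per basis element. Writing $b_{i+1,j} = \sum_e h_e\,x_{i+1}^e$ with each $h_e\in W_i$ (by nesting) and expanding $h_e$ in the basis of $W_i$ gives $b_{i+1,j} = \sum_k b_{i,k}\,\ell_{k,j}^{(i+1)}(x_{i+1})$ with $\ell_{k,j}^{(i+1)}$ univariate; use $\ell_{k,j}^{(i+1)}$ as the label of $v_{i,k}\to v_{i+1,j}$. An induction on $i$ then shows the weight‑sum of $s\rightsquigarrow v_{i,j}$ paths equals $b_{i,j}$ (base case: the empty path at the source has weight $1=b_{0,1}$; the step is the displayed identity), so the unique layer‑$n$ vertex computes $b_{n,1}=f$. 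This $\RO$ has width $\max_i r_i$ and size $\sum_{i\in[n]} r_i$, matching the lower bound.

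\emph{Main obstacle.} The lower bound is the routine "cut" argument. The delicate step is the upper bound: small rank of each Nisan matrix in isolation is not enough — one must realize all $n$ cuts \emph{simultaneously} by a single layered graph whose $i$‑th edge‑layer reads only $x_i$. This is exactly what the nesting property $[x_{i+1}^e]\,W_{i+1}\subseteq W_i$ provides, and it is the reason the tower $W_0,\dots,W_n$ must be built from partial \emph{coefficients} of the one polynomial $f$ (partial derivatives or evaluation functionals would not nest compatibly across consecutive cuts). A secondary, purely bookkeeping matter is the treatment of the two trivial end layers and whether the lone source is tallied in "size", which at worst shifts the size identity by an additive constant and is reconciled by the conventions in \autoref{def:roabp}.
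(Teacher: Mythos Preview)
Your argument is correct and is essentially the standard Nisan‐style proof: the ``cut'' decomposition $f=\sum_v L_vR_v$ for the lower bound, and the tower of partial‐coefficient spaces $W_i$ with the nesting property $[x_{i+1}^e]W_{i+1}\subseteq W_i$ for the upper bound, are exactly the ingredients of the classical characterization. The paper does not give its own proof of this theorem; it simply cites \cite[Lemma~4.5.8]{Forbes2014}, whose argument is the one you have reproduced. Your remark about the additive $\pm 1$ in the size count is well taken and is indeed only a convention issue about whether the source layer is tallied.
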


We next prove a lemma to demonstrate the usefulness of the $\RO$ characterization, which will be used in our later proofs.

\begin{observation}
\label{ob:rochar-example}
    Consider an $n$-variate polynomial as follows 
    $$ f \; \coloneqq \; \prod_{i \in [n]} \left(1 + x_i + x_i^2 + \ldots + x_i^{d-1}\right).$$ Let $Y = \{y_1, \ldots, y_n\}$ and $Z = \{z_1, \ldots, z_n\}$ be disjoint set of variables. 
    Define a $2n$-variate polynomial $\tilde{f} \coloneqq f(y_1 z_1, \dots, y_n z_n)$.
    The rank of the Nisan matrix $M_{Y,Z}(\tilde{f})$ is $d^n$.
\end{observation}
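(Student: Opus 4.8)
The plan is to compute $\tilde f$ in closed form and then read the rank of its Nisan matrix off directly, sandwiching it between $d^n$ and $d^n$.

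First I would expand the product. Since the $i$-th factor of $f$ evaluated at $y_i z_i$ is $\sum_{e_i = 0}^{d-1} (y_i z_i)^{e_i}$, distributing the product over $i \in [n]$ gives
\[
\tilde f(\vecy,\vecz) \;=\; \prod_{i \in [n]} \inparen{\sum_{e_i=0}^{d-1} (y_i z_i)^{e_i}} \;=\; \sum_{\vece \in \{0,\dots,d-1\}^n} \vecy^{\vece}\, \vecz^{\vece}.
\]
The key structural point, which I would record next, is that every monomial of $\tilde f$ has the form $\vecy^{\vece}\vecz^{\vece}$ with \emph{matching} exponent vectors and coefficient exactly $1$; equivalently, for monomials $m_Y = \vecy^{\veca}$ over $Y$ and $m_Z = \vecz^{\vecb}$ over $Z$ we have $\coef_{m_Y m_Z}(\tilde f) = 1$ when $\veca = \vecb \in \{0,\dots,d-1\}^n$ and $0$ otherwise.

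For the lower bound $\rank(M_{Y,Z}(\tilde f)) \geq d^n$, I would restrict attention to the $d^n$ rows indexed by $\{\vecy^{\vece} : \vece \in \{0,\dots,d-1\}^n\}$ and the $d^n$ columns indexed by $\{\vecz^{\vece} : \vece \in \{0,\dots,d-1\}^n\}$; by the coefficient computation this square submatrix is precisely the $d^n \times d^n$ identity matrix (over any field, using only $1 \neq 0$), so its rank is $d^n$. For the matching upper bound I would observe that a row of $M_{Y,Z}(\tilde f)$ indexed by $m_Y$ can be nonzero only if $m_Y$ occurs as the $Y$-part of some monomial of $\tilde f$, and by the expansion the set of such $m_Y$ is exactly $\{\vecy^{\vece} : \vece \in \{0,\dots,d-1\}^n\}$, of size $d^n$; hence $M_{Y,Z}(\tilde f)$ has at most $d^n$ nonzero rows. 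Combining the two bounds yields $\rank(M_{Y,Z}(\tilde f)) = d^n$.

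There is essentially no obstacle here; the only points deserving a moment's care are that no cancellation can occur among the coefficients (each monomial $\vecy^{\vece}\vecz^{\vece}$ arises from the unique choice of exponent $e_i$ in the $i$-th factor, so the coefficient is genuinely $1$ in every characteristic), and that the exhibited identity submatrix really is full $d^n \times d^n$, which is immediate from the diagonal pattern $\veca = \vecb$. This observation is exactly the reason the gadget substitution $x_i \mapsto y_i z_i$ lifts the $d^n$ monomial count of $g$ to an $\RO$-width lower bound, and it is the template for the more elaborate expander-gadget substitution used later.

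\qed
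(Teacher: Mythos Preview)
Your proof is correct and follows essentially the same approach as the paper: both arguments observe that each $Y$-monomial $\vecy^{\vece}$ with $\vece \in \{0,\dots,d-1\}^n$ pairs with a unique $Z$-monomial $\vecz^{\vece}$ (and vice versa), so the Nisan matrix is, after discarding zero rows and columns, the $d^n \times d^n$ identity. The paper states this in one line by calling $M_{Y,Z}(\tilde f)$ a permutation matrix; your version is a bit more explicit about the expansion and the separate upper and lower bounds, but the content is the same.
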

\begin{remark}
    Note that $f$ itself can be computed by a constant width $\RO$ in any order.
\end{remark}
\begin{proof}
   Observe that for every monomial $m_Y$ over $Y$ such that each variable has degree at most $d-1$ in $m_Y$, there is a unique monomial $m_Z$ of the same form over $Z$ such that $\coef_{m_Y \cdot m_Z}(\tilde{f})$ is not zero, and reciprocally, for any monomial $m_Z$ over $Z$ there is a unique monomial $m_Y$. Consequently, the Nisan matrix $M_{Y,Z}(\tilde{f})$ is a permutation matrix, and hence has rank $d^n$.
\end{proof}

\paragraph{Evaluation Dimension.}
An alternative perspective on the Nisan matrix was introduced by Saptharishi~\cite[Section 6]{FS2013}. As we will see in our proofs, this viewpoint often makes it easier to reason about $\RO$ complexity.

\begin{definition}[Evaluation Dimension]
\label{def:eval-dim}
    Let $f(\vecx)$ be an $n$-variate polynomial on $X = \{x_1, \ldots, x_n\}$ over a field $\F$, and a subset of variables $Y$ and $Z \coloneq X \backslash Y$. The \emph{evaluation dimension} of $f$ with respect to the partition $Y \sqcup Z$ is defined as
    $$
    \evalDim_{Y,Z}(f) \;\coloneqq\; \rank \left( \setdef{f(Y, \veca)}{\veca \in \F^{|Z|}} \right).
    $$
\end{definition}

Over large fields, the evaluation dimension is equivalent to the rank of the Nisan matrix. However, this equivalence does not hold when restricting the evaluation points, e.g. to the Boolean cube. Nevertheless, the evaluation dimension is always a lower bound of the rank of the Nisan matrix (\cite[Lemma 11.9]{Sap2021}, and see also \cite[Corollary 4.5.12]{Forbes2014}).

\begin{theorem}
\label{thm:evaldim-ro}
    Let $f(\vecx)$ be an $n$-variate polynomial, and fix a permutation $\pi$ on variables. For a variable partition $Y \sqcup Z$ with $Y = \{x_{\pi(1)},\ldots,x_{\pi(i)}\}$ and $Z = \{x_{\pi(i+1)},\ldots,x_{\pi(n)}\}$, 
    any $\RO$ that computed $f$ in the order $\pi$ has width at least $\evalDim_{Y,Z}(f)$.

    Conversely, if the field \(\F\) is infinite, there is a \(\RO\) computing \(f\) of width $\evalDim_{Y,Z}(f)$.
\end{theorem}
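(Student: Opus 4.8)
The statement splits into a lower bound and a matching construction, both built on the \emph{frontier-cut} view of a read-once program and its translation into the coefficient (Nisan) matrix. Throughout, for $j\in\{0,1,\dots,n\}$ put $Y_j=\{x_{\pi(1)},\dots,x_{\pi(j)}\}$ and let $Z_j$ be its complement, so the partition in the statement is $Y=Y_i$, $Z=Z_i$; I assume $f\neq 0$. For the lower bound, let an $\RO$ of width $w$ compute $f$ in order $\pi$, and look at the nodes $v_1,\dots,v_{w'}$ ($w'\le w$) lying between layers $i$ and $i+1$. By \autoref{def:roabp} the first $i$ layers read only the variables of $Y$ and the remaining layers only those of $Z$, so with $g_t(Y):=\sum_{\gamma\,:\,s\rightsquigarrow v_t}\weight(\gamma)$ and $h_t(Z):=\sum_{\gamma\,:\,v_t\rightsquigarrow t}\weight(\gamma)$, path-counting yields $f=\sum_{t=1}^{w'}g_t(Y)\,h_t(Z)$. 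Hence every partial evaluation $f(Y,\veca)=\sum_t h_t(\veca)\,g_t(Y)$ lies in $\vecspan\{g_1,\dots,g_{w'}\}$, so $\evalDim_{Y,Z}(f)\le w'\le w$. (One could instead quote \autoref{thm:roabp-char} together with the inequality $\evalDim_{Y,Z}(f)\le\rank M_{Y,Z}(f)$ recalled above.)

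For the converse, note that over an infinite field $\evalDim_{Y_j,Z_j}(f)=\rank M_{Y_j,Z_j}(f)$ for each $j$, so it suffices to build an $\RO$ in order $\pi$ whose $j$-th frontier has exactly $\rank M_{Y_j,Z_j}(f)$ nodes; its width at the cut $Y\mid Z$ is then $\evalDim_{Y,Z}(f)$ (and its overall width $\max_j\evalDim_{Y_j,Z_j}(f)$, as in \autoref{thm:roabp-char}). For each $j$ write $f=\sum_m m(Y_j)\,c^{(j)}_m(Z_j)$, the sum over monomials $m$ in $Y_j$, and let $W_j\subseteq\F[Z_j]$ be the span of the polynomials $c^{(j)}_m$. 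Since the rows of $M_{Y_j,Z_j}(f)$ are exactly the coefficient vectors of the $c^{(j)}_m$, we get $\dim W_j=\rank M_{Y_j,Z_j}(f)$, with $W_0=\vecspan\{f\}$ and $W_n=\vecspan\{1\}$ one-dimensional. Fix a basis $b_{j,1},\dots,b_{j,r_j}$ of each $W_j$ and create one node $v_{j,l}$ on frontier $j$ per basis vector, so $s=v_{0,1}$ and $t=v_{n,1}$. For the layer reading $x:=x_{\pi(j+1)}$, use that the $x^k$-coefficient of $c^{(j)}_m$ equals $c^{(j+1)}_{m\cdot x^k}$, whence the $x^k$-coefficient of any element of $W_j$ lies in $W_{j+1}$; this lets us expand each $b_{j,l}=\sum_{l'}e_{l,l'}(x)\,b_{j+1,l'}$ with $e_{l,l'}\in\F[x]$, and we take $e_{l,l'}(x)$ as the label of the edge $v_{j,l}\to v_{j+1,l'}$. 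A downward induction on $j$ from the sink (where the empty path gives value $1=b_{n,1}$) shows that the weighted sum of $v_{j,l}\rightsquigarrow t$ paths equals $b_{j,l}$; in particular the program outputs the value at $s=v_{0,1}$, which is $b_{0,1}=f$.

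The lower bound is routine. The real work is in the construction: selecting the spaces $W_j$ consistently along the order, checking that the induced edge labels $e_{l,l'}$ really are univariate in $x_{\pi(j+1)}$ of bounded degree, and verifying that the backward recursion closes up to output exactly $f$ with the advertised frontier sizes. One must also keep straight which quantity is being matched — the width \emph{at the cut} $Y\mid Z$, rather than the overall width $\max_j\evalDim_{Y_j,Z_j}(f)$ — and note that infiniteness of $\F$ is precisely what makes $\evalDim$ equal $\rank M$ (by the interpolation/Vandermonde argument already cited before the statement); over small fields only the lower-bound half survives.
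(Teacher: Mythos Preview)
The paper does not actually prove this theorem; it is stated as a known result, with the surrounding text citing \cite[Lemma 11.9]{Sap2021} and \cite[Corollary 4.5.12]{Forbes2014} for the fact that evaluation dimension lower-bounds the Nisan rank, and implicitly relying on \autoref{thm:roabp-char} for the exact characterisation. So there is no in-paper proof to compare against.

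Your argument is correct and is precisely the standard one underlying those references. The lower bound via the frontier decomposition $f=\sum_t g_t(Y)h_t(Z)$ is exactly right. For the converse, your construction is Nisan's: the spaces $W_j$ of ``right-coefficient'' polynomials have dimension $\rank M_{Y_j,Z_j}(f)$, the transition $W_j\to W_{j+1}$ through the variable $x_{\pi(j+1)}$ gives the univariate edge labels, and the backward induction recovers $f$ at the source. Two small points you handled implicitly but correctly: one should choose the bases so that $b_{0,1}=f$ and $b_{n,1}=1$ (both possible since $W_0=\vecspan\{f\}$ and $W_n=\vecspan\{1\}$ when $f\neq 0$), and the edge labels $e_{l,l'}$ are genuinely univariate in $x_{\pi(j+1)}$ because the $x^k$-coefficient map sends $W_j$ into $W_{j+1}$.

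You are also right to flag the ambiguity in the converse as stated: the roABP you build has width $\max_j \evalDim_{Y_j,Z_j}(f)$, and it is only the width \emph{at the particular cut} $i$ that equals $\evalDim_{Y,Z}(f)$. The paper's phrasing is slightly loose on this point, and your reading is the correct one (and consistent with \autoref{thm:roabp-char}).
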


\subsection{Elementary Symmetric Polynomials}

Symmetric polynomials are those that are invariant under any permutation of the variables. A fundamental and well-studied family within symmetric polynomials is the elementary symmetric polynomials, which are defined as follows.

\begin{definition}[Elementary Symmetric Polynomial] \label{def:ESym}
    The \emph{elementary symmetric polynomial} of degree $d$, on variables $x_1, \dots, x_n$, is defined as
    $$
    \ESym_n^d(x_1, \ldots, x_n) \;\coloneqq\; \sum_{1 \leq i_1 < \ldots < i_d \leq n} x_{i_1} \dots x_{i_d}.
    $$
\end{definition}

Whenever clear from the context, we write $e_d \coloneqq \ESym_n^d$ to denote the degree-$d$ elementary symmetric polynomial in $n$ variables. A more convenient way to define these polynomials is via the following generating functions:
\begin{equation}
    \prod_{i \in [n]} \left(1 + x_i \cdot t \right) \; = \; \sum_{i =0}^{n} \ESym_n^i(\vecx) \cdot t^i. \label{eq:esym-product}
\end{equation}

These polynomials are called \emph{elementary} because they form the fundamental building blocks for all symmetric polynomials.
For any $n$-variate polynomial $f(\vecx)$, we define the $n$-variate symmetric polynomial $f_{\sym} \coloneqq f(e_1, \dots, e_n)$.

\begin{theorem}[Fundamental Theorem of Symmetric Polynomials] \label{thm:fundamental-sym}
    Let $R$ be any commutative ring, and let $g \in R[x_1, \ldots, x_n]$ be a symmetric polynomial. Then there exists a unique polynomial $f \in R[y_1, \ldots, y_n]$ such that
    $$
    g \;=\; f_{\sym} \;\coloneqq\; f(\ESym_n^1, \ldots, \ESym_n^n).
    $$
\end{theorem}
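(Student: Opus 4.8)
The plan is to prove existence and uniqueness separately, in both cases tracking how $\ESym_n^1,\ldots,\ESym_n^n$ behave under the lexicographic monomial order on $R[x_1,\ldots,x_n]$ with $x_1 \succ x_2 \succ \cdots \succ x_n$. The one preliminary fact I would record is that $\ESym_n^k$ has lex-leading monomial $x_1 x_2 \cdots x_k$ with coefficient $1$, so that $\ESym_n^{1\,b_1}\cdots \ESym_n^{n\,b_n}$ has lex-leading monomial $x_1^{b_1+\cdots+b_n}\, x_2^{b_2+\cdots+b_n}\cdots x_n^{b_n}$, again with coefficient $1$ (leading monomials multiply, and a product of monic leading terms is monic); and that the assignment $(b_1,\ldots,b_n)\mapsto(b_1+\cdots+b_n,\ b_2+\cdots+b_n,\ \ldots,\ b_n)$ is a bijection on $\N^n$, invertible by successive differences.

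For existence, I would induct on the lex-leading monomial of $g$, which is legitimate because lex restricted to the finite set of monomials in $n$ variables of degree at most $\deg g$ is a well-order. If $g=0$ take $f=0$; otherwise let $c\,x_1^{a_1}\cdots x_n^{a_n}$ be the lex-leading term of $g$. Symmetry of $g$ forces $a_1\ge a_2\ge\cdots\ge a_n$, since the leading exponent vector must dominate all its own permutations (each of which occurs in $g$), hence must be weakly decreasing. Then $h \coloneqq \ESym_n^{1\,(a_1-a_2)}\,\ESym_n^{2\,(a_2-a_3)}\cdots \ESym_n^{(n-1)\,(a_{n-1}-a_n)}\,\ESym_n^{n\,a_n}$ is symmetric, has lex-leading monomial exactly $x_1^{a_1}\cdots x_n^{a_n}$ with coefficient $1$, and has degree $\sum_{k<n} k(a_k-a_{k+1}) + n a_n = a_1+\cdots+a_n \le \deg g$. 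Thus $g - c\,h$ is symmetric with strictly smaller lex-leading monomial and degree at most $\deg g$, so the inductive hypothesis gives $g - c\,h = p(\ESym_n^1,\ldots,\ESym_n^n)$, and adding back $c$ times the corresponding monomial in the $y_i$ yields the desired $f$.

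For uniqueness, I would instead prove the equivalent statement that $\ESym_n^1,\ldots,\ESym_n^n$ are algebraically independent over $R$: if $f\neq 0$ satisfies $f(\ESym_n^1,\ldots,\ESym_n^n)=0$ we reach a contradiction. By the bijection noted above, distinct monomials $y^\beta$ occurring in $f$ produce distinct lex-leading monomials of $\ESym_n^{1\,b_1}\cdots \ESym_n^{n\,b_n}$. Choosing the monomial $y^{\beta^\star}$ of $f$ with nonzero coefficient $c$ whose associated product has the lex-largest leading monomial, that leading monomial appears in the expansion of $f(\ESym_n^1,\ldots,\ESym_n^n)$ only through the $\beta^\star$-term (all other terms lie strictly below it in lex), with coefficient $c\cdot 1 = c \neq 0$. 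Hence $f(\ESym_n^1,\ldots,\ESym_n^n)\neq 0$, a contradiction; uniqueness of $f$ with $g = f(\ESym_n^1,\ldots,\ESym_n^n)$ follows by applying this to $f_1 - f_2$.

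The routine part is this monomial bookkeeping. The two points I would be careful about are (i) termination of the peeling step in the existence argument — which is exactly why I confine everything to the finite set of monomials of degree at most $\deg g$, where lex is a well-order — and (ii) that the proof must work over an arbitrary commutative ring, with no division available and possibly with zero divisors. This last point causes no difficulty precisely because each $\ESym_n^k$, and hence each product $\ESym_n^{1\,b_1}\cdots \ESym_n^{n\,b_n}$, is monic in its lex-leading monomial, so no cancellation of leading terms and no inversion of scalars is ever needed.
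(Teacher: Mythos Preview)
Your proof is correct and is the classical lexicographic-order argument. The paper does not actually prove this theorem; it simply cites \cite[Theorem IV.6.1]{Lang} (and \cite{BC2017}), and the argument you have written is essentially the one found there, so there is no methodological difference to discuss.
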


We refer to \cite[Theorem IV.6.1]{Lang} for the proof of Fundamental Theorem of Symmetric Polynomials (see also \cite{BC2017}).
We will also need the following variable partitioning lemma, which is a special case of \cite[Theorem 1.1]{Merca2013}.
\begin{lemma}[$\ESym$ Variable Partition]\label{lem:esym-varpartition}
Let $Y \sqcup Z$ be a partition of the variables. 
Then,
$$
    \ESym_{\lvert Y \sqcup Z \rvert}^d(Y,Z) \; = \; \sum_{i=0}^{d} \left( \ESym_{|Y|}^i(Y) \cdot \ESym_{|Z|}^{d-i}(Z)\right).
$$
\end{lemma}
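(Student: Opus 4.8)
The plan is to derive the identity directly from the generating-function description of the elementary symmetric polynomials in \eqref{eq:esym-product}. Write $Y = \{y_1,\ldots,y_{|Y|}\}$ and $Z = \{z_1,\ldots,z_{|Z|}\}$, and introduce a fresh indeterminate $t$. Since the product defining the generating function factors over the disjoint union, we have
\[
\prod_{v \in Y \sqcup Z}\left(1 + v t\right) \;=\; \left(\prod_{j=1}^{|Y|}\left(1 + y_j t\right)\right)\cdot \left(\prod_{j=1}^{|Z|}\left(1 + z_j t\right)\right).
\]

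Next I would apply \eqref{eq:esym-product} to each of the three products. The left-hand side becomes $\sum_{d \geq 0} \ESym_{|Y \sqcup Z|}^d(Y,Z)\, t^d$, while the right-hand side becomes the product of polynomials in $t$ (with coefficients in $\F[Y,Z]$) given by $\left(\sum_{i \geq 0}\ESym_{|Y|}^i(Y)\, t^i\right)\left(\sum_{k \geq 0}\ESym_{|Z|}^k(Z)\, t^k\right)$. Expanding this product via the usual Cauchy formula gives $\sum_{d \geq 0}\left(\sum_{i=0}^d \ESym_{|Y|}^i(Y)\cdot \ESym_{|Z|}^{d-i}(Z)\right) t^d$. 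Comparing the coefficient of $t^d$ on both sides yields exactly the claimed identity; terms with $i > |Y|$ or $d-i > |Z|$ simply vanish, so the stated summation range $0 \le i \le d$ is fine.

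There is essentially no obstacle here: the argument is a routine manipulation of generating functions, and the only thing to check is that the coefficient extraction is legitimate, which it is since all sums involved are finite. Alternatively, one can give a one-line bijective proof: every squarefree degree-$d$ monomial over $Y \sqcup Z$ factors uniquely as a squarefree monomial over $Y$ of some degree $i$ times a squarefree monomial over $Z$ of degree $d-i$, so grouping the monomials of $\ESym_{|Y\sqcup Z|}^d(Y,Z)$ according to the number $i$ of their variables lying in $Y$ produces precisely $\sum_{i=0}^d \ESym_{|Y|}^i(Y)\cdot\ESym_{|Z|}^{d-i}(Z)$. Either proof also shows the identity holds over an arbitrary commutative ring, consistent with the fact that it is quoted as a special case of \cite[Theorem 1.1]{Merca2013}.
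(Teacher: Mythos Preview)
Your proof is correct and follows essentially the same generating-function approach as the paper: the paper also applies \eqref{eq:esym-product} separately to the $Y$- and $Z$-products, multiplies them, and compares the coefficient of $t^d$. Your additional bijective argument is a nice alternative but goes beyond what the paper does.
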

\begin{proof}
   Let $Y = \{y_1, \ldots, y_m\}$ and $Z = \{z_1, \ldots, z_n\}$. Then using \autoref{eq:esym-product} we can write, 
   \begin{align*}
       \sum_{i =0}^{m}\, \ESym_{m}^i(Y) \cdot t^i &\;=\; \prod_{i=1}^{m}\, (1 + y_i\cdot t)\\
       \sum_{i =0}^{n}\, \ESym_{n}^i(Z) \cdot t^i &\;=\; \prod_{i=1}^{n}\, (1 + z_i\cdot t)
   \end{align*}
   Taking the product of the two polynomials above, and comparing the coefficients of $t^d$ on both sides proves the lemma.
\end{proof}

As a direct consequence of extracting the coefficient of $t^d$ from \autoref{eq:esym-product}, Shpilka and Wigderson~\cite{SW2001} (crediting Ben-Or) presented the following identity for elementary symmetric polynomials, which yields a near-optimal $\RO$ of width $O(n)$ computing $\ESym_n^d$ in any variable order:

\begin{proposition}[\protect{\cite[Theorem 5.1]{SW2001}}]\label{prop:esym-produni}
    For any $n \in \mathbb{N}$ and $d \leq n$, let $\omega$ be a primitive $n$-th root of unity. There exist $\beta_0, \ldots, \beta_{n-1} \in \C$ such that $$\ESym_n^d(x_1, \dots, x_n) \; =  \; \sum_{0 \leq j < n} \beta_j \left(1 + \omega^j x_1\right) \cdot \left(1 + \omega^j x_2\right) \cdots \left(1 + \omega^j x_n\right).$$
\end{proposition}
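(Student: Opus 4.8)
The plan is to prove the identity by evaluating the elementary-symmetric generating function at the $n$-th roots of unity and then inverting the resulting Fourier/Vandermonde system. The starting point is \autoref{eq:esym-product}, which as an identity in a fresh variable $t$ reads $\prod_{i\in[n]}\inparen{1 + x_i\, t} = \sum_{k=0}^{n} \ESym_n^k(\vecx)\, t^k$. First I would substitute $t = \omega^j$ for each $j \in \{0,1,\dots,n-1\}$, obtaining the $n$ polynomial identities $\prod_{i\in[n]}\inparen{1+\omega^j x_i} = \sum_{k=0}^{n} \ESym_n^k(\vecx)\,\omega^{jk}$. For scalars $\beta_0,\dots,\beta_{n-1}$ yet to be chosen, expanding the products in the claimed combination gives
\[
\sum_{0\le j<n}\beta_j\prod_{i\in[n]}\inparen{1+\omega^j x_i} \;=\; \sum_{k=0}^{n}\inparen{\sum_{0\le j<n}\beta_j\,\omega^{jk}}\ESym_n^k(\vecx),
\]
so it suffices to pick the $\beta_j$ so that the bracketed scalar $\sum_j \beta_j \omega^{jk}$ equals $1$ for $k=d$ and $0$ for the remaining $k\in\{0,1,\dots,n\}$.

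Next I would make the one key observation: $\omega^{jk}$ depends only on $k \bmod n$, so the constraints for $k=0$ and $k=n$ are identical, both demanding $\sum_j \beta_j = 0$ (here I use $1\le d\le n-1$, so neither $0$ nor $n$ equals $d$), which is consistent. What remains are exactly $n$ linear constraints $\sum_{0\le j<n}\beta_j\,\omega^{jk} = \mathbf{1}[k=d]$ for $k\in\{0,1,\dots,n-1\}$, whose coefficient matrix is the Vandermonde matrix on the $n$ distinct nodes $\omega^0,\dots,\omega^{n-1}$ and hence invertible over $\C$. Rather than invert it abstractly, I would simply exhibit the solution $\beta_j \coloneqq \tfrac1n\,\omega^{-jd}$ and verify it via the orthogonality relation $\tfrac1n\sum_{0\le j<n}\omega^{j(k-d)} = \mathbf{1}[k\equiv d \pmod n]$: this annihilates every term with $k\not\equiv d$, in particular both the $k=0$ and the $k=n$ terms, leaving precisely $\ESym_n^d(\vecx)$ on the right-hand side.

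The main (indeed only) subtlety is this degree-wraparound bookkeeping: the generating function carries the $n+1$ coefficients $\ESym_n^0,\dots,\ESym_n^n$, but only $n$ evaluation points $\omega^0,\dots,\omega^{n-1}$ are available, so one must check that the collapse of the $k=0$ and $k=n$ constraints does no harm — and it does not, exactly because $1\le d\le n-1$ makes both of them the homogeneous condition $\sum_j\beta_j=0$. The boundary value $d=n$, where $\ESym_n^n = x_1\cdots x_n$, lies outside this analysis and is trivial anyway, as $x_1\cdots x_n$ is computed by a width-one $\RO$ in every order. Finally, for the stated $\RO$ consequence: each summand $\prod_{i\in[n]}(1+\omega^j x_i)$ is a product of univariate linear forms, hence has a width-one $\RO$ in every variable order, and taking the $\beta_j$-weighted sum of these $n$ programs in the usual parallel fashion yields an $\RO$ of width $O(n)$ for $\ESym_n^d$ in any order.
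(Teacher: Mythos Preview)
Your proof is correct and follows precisely the approach the paper indicates (but does not spell out): extract the coefficient of $t^d$ from the generating-function identity \eqref{eq:esym-product} by evaluating at the $n$-th roots of unity and inverting the resulting DFT system, yielding the explicit $\beta_j = \tfrac{1}{n}\omega^{-jd}$. You even catch a subtlety the paper's statement glosses over: with only $n$ evaluation points the $k=0$ and $k=n$ constraints coincide, so the literal identity fails at the boundary $d=n$ (and $d=0$); your handling of this case by falling back on the trivial width-one $\RO$ for $x_1\cdots x_n$ is the appropriate fix.
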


\begin{remark}\label{rem:esym-roABP}
    $\ESym_n^d$ can also be computed by a provably tight $\RO$ of width $\min(d+1,n-d+1)$ in any variable order using only coefficients \(0\) and \(1\) (see \cite[Construction 1.2]{RT2022}).
\end{remark}

\subsection{Resultant and Discriminant}

We recall the definitions and properties of resultant and discriminant from factorization literature. We encourage readers to refer to~\cite[Chapter 6]{GG2013} for a more detailed textbook treatment of these concepts.

\begin{definition}[Resultant] 
\label{def:resultant}
    Consider two $n$-variate polynomials $f, g \in \F[\vecx][y]$ as follows:
    $$f \;\coloneqq\; \sum_{i = 0}^{d_1} f_i(\vecx) \cdot y^i \quad \text{and} \quad g \;\coloneqq\; \sum_{i=0}^{d_2} g_i(\vecx) \cdot y^i.$$ Define the \emph{Sylvester matrix} of $f$ and $g$ as the following $(d_1 + d_2) \times (d_1 + d_2)$ matrix:
    \begin{align*}
        \mathbf{S}_y(f, g)\; =\;
        \begin{pmatrix}
            f_{d_1} &   &   &  & g_{d_2} &  &  &  &  \\
            f_{d_1-1} & f_{d_1} &   &  & g_{d_2 - 1} & g_{d_2} &  &  &  \\
            \vdots & f_{d_1-1} & \ddots &  & \vdots & g_{d_2-1} & \ddots &  &  \\
            \vdots & \vdots & \ddots &  f_{d_1} & \vdots & \vdots & \ddots & g_{d_2} &  \\
            f_0 & \vdots &   & f_{d_1 - 1} &  g_0 & \vdots &   & g_{d_2-1} \\
             & f_{0} & \ddots & \vdots &  & g_0 & \ddots &  \vdots \\
             &  & \ddots & \vdots &  &  & \ddots & \vdots \\
             &  &  & f_{0} &  &  &  & g_0
        \end{pmatrix}
    \end{align*}
    Then the resultant of the two polynomials with respect to $y$ is defined as the determinant of the Sylvester matrix as: $$\operatorname{Res}_y(f,g) \;\coloneqq \;\Det(\mathbf{S}_y(f, g)).$$
\end{definition}

The resultant of two polynomials is non-zero if and only if their $\gcd$ is $1$. A well-known case of resultant relevant for factoring algorithms is the discriminant.

\begin{definition}[Discriminant]
\label{def:discriminant}
    Consider a $n$-variate polynomial $f$. The \emph{discriminant} with respect to \(y\) of $f$ is defined as the resultant, with respect to \(y\), of $f$ and its \(y\)-derivative, i.e., $$\operatorname{Disc}_y(f) \;\coloneqq\; \operatorname{Res}_y(f, \partial_y f).$$
\end{definition}

The following well-known observation will be useful in the analyses of complexity of the resultant and the discriminant.

\begin{observation}[see \protect{\cite[Chapter 3]{CLO2005}}]
\label{ob:res-product}
    Let $f = \prod_{i \in [n]} (y - \alpha_i)$ and $g = \prod_{i \in [m]} (y - \beta_i)$ be two univariate polynomials. Then the resultant of $f$ and $g$ with respect to $y$ is given by
    $$ \operatorname{Res}_y(f, g) \;=\; \prod_{i \in [n]} g(\alpha_i).$$
\end{observation}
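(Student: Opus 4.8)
\medskip\noindent\textbf{Proof proposal.}\quad
The plan is to interpret $\operatorname{Res}_y(f,g)$ as the determinant of a multiplication operator and then triangularize that operator using the roots $\alpha_1,\dots,\alpha_n$. Since $f=\prod_{i\in[n]}(y-\alpha_i)$ is monic of degree $n$, the quotient ring $A\coloneqq\F[y]/(f)$ is an $n$-dimensional $\F$-vector space, and multiplication by $g$ is an $\F$-linear endomorphism $\mu_g\colon A\to A,\ h+(f)\mapsto gh+(f)$. The first step is the classical identity
\[
\operatorname{Res}_y(f,g)\;=\;\det(\mu_g),
\]
which holds because $f$ is monic. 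It can be read off the Sylvester matrix: $\mathbf{S}_y(f,g)$ represents, in the monomial bases, the map $\Phi\colon\F[y]_{<m}\oplus\F[y]_{<n}\to\F[y]_{<n+m}$, $(u,v)\mapsto uf+vg$, where $m=\deg g$; its restriction to $\F[y]_{<m}\oplus 0$ is $u\mapsto uf$, whose image equals $(f)\cap\F[y]_{<n+m}$, and choosing bases adapted to this subspace makes $\Phi$ block triangular, with the $u\mapsto uf$ block unitriangular (as $f$ is monic, hence of determinant $1$) and the complementary block precisely the matrix of $\mu_g$ under the natural isomorphism $\F[y]_{<n}\cong A$. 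I would either carry out this short verification or simply cite \cite[Chapter 3]{CLO2005}.

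For the second step I would compute $\det(\mu_g)$ directly. List the roots of $f$ with multiplicity as $\alpha_1,\dots,\alpha_n$ and put $p_k\coloneqq\prod_{i=1}^{k}(y-\alpha_i)$ for $0\le k\le n-1$; since $\deg p_k=k$, the images $\bar p_0,\dots,\bar p_{n-1}$ form a basis of $A$. Dividing $g$ by $y-\alpha_{k+1}$ gives $g=g(\alpha_{k+1})+(y-\alpha_{k+1})h_k$ for some $h_k\in\F[y]$, so $g\,p_k=g(\alpha_{k+1})\,p_k+h_k\,p_{k+1}$; and $\overline{h_k\,p_{k+1}}$ lies in the ideal $(p_{k+1})/(f)$ of $A$, which is spanned by $\bar p_{k+1},\dots,\bar p_{n-1}$ (these are multiples of $p_{k+1}$ of distinct degrees $k+1,\dots,n-1$, matching $\dim(p_{k+1})/(f)=n-k-1$). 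Hence in the ordered basis $(\bar p_0,\dots,\bar p_{n-1})$ the matrix of $\mu_g$ is lower triangular with diagonal $g(\alpha_1),\dots,g(\alpha_n)$, giving $\det(\mu_g)=\prod_{i\in[n]}g(\alpha_i)$, as claimed. (This makes no distinctness assumption on the $\alpha_i$; if they happen to be distinct, one can instead invoke CRT, $A\cong\prod_i\F[y]/(y-\alpha_i)$, under which $\mu_g$ is diagonal with entries $g(\alpha_i)$.)

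The main obstacle is the first step: establishing $\operatorname{Res}_y(f,g)=\det(\mu_g)$ on the nose, i.e.\ checking that the adapted change of bases introduces no stray sign for the particular layout of $\mathbf{S}_y(f,g)$ fixed in \autoref{def:resultant}. This is routine once $f$ is monic, but it is the one place that needs bookkeeping. A purely elementary alternative avoids the quotient ring altogether: regarding $\alpha_1,\dots,\alpha_n,\beta_1,\dots,\beta_m$ as indeterminates, $\operatorname{Res}_y(f,g)$ vanishes whenever $\alpha_i=\beta_j$ (for then $f$ and $g$ share the root $y=\alpha_i$, so their $\gcd$ is nontrivial --- cf.\ the criterion noted after \autoref{def:resultant}), so each of the pairwise non-associate primes $\alpha_i-\beta_j$, and hence $\prod_{i,j}(\alpha_i-\beta_j)$, divides $\operatorname{Res}_y(f,g)$ in the UFD $\F[\boldsymbol\alpha,\boldsymbol\beta]$; a degree count on the Sylvester determinant shows $\operatorname{Res}_y(f,g)$ is homogeneous of degree exactly $nm=\deg\prod_{i,j}(\alpha_i-\beta_j)$, so the two agree up to a scalar; and specializing all $\beta_j=0$ (where both $\operatorname{Res}_y(f,y^m)$ and $\prod_i\alpha_i^m$ are immediate) fixes the scalar to $1$. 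In this route the degree count and the normalizing specialization are the fiddly parts.
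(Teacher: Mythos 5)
The paper does not prove this observation at all: it is stated with a pointer to \cite[Chapter 3]{CLO2005} and used as a known fact, so there is no in-paper argument to compare against. Your proposal is a correct, self-contained proof of this classical identity.

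Your primary route --- interpreting $\operatorname{Res}_y(f,g)$ as $\det(\mu_g)$ on $A=\F[y]/(f)$ and triangularizing $\mu_g$ in the Newton-type basis $\bar p_0,\dots,\bar p_{n-1}$ with $p_k=\prod_{i\le k}(y-\alpha_i)$ --- is the cleanest textbook argument I know for this fact, and you handle the only subtle point correctly: the division $g=g(\alpha_{k+1})+(y-\alpha_{k+1})h_k$ shows the matrix is lower triangular with the correct diagonal, and the argument makes no distinctness assumption on the $\alpha_i$ (the CRT shortcut you mention only works in the separable case, so it is good that you relegate it to a parenthetical). The one step you flag as needing bookkeeping, $\operatorname{Res}_y(f,g)=\det(\mu_g)$ with no stray sign for the exact Sylvester layout in \autoref{def:resultant}, is indeed where the care belongs; it is true for monic $f$ and is precisely what the CLO reference supplies, so it would be reasonable to cite it for that step alone. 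Your secondary route via the factorization $\prod_{i,j}(\alpha_i-\beta_j)$ in the polynomial ring, a degree count, and normalization at $\beta_j=0$ is the other standard proof and is also correct; its weakness, as you note, is that the degree bookkeeping and the sign of the normalizing specialization are fiddlier than the linear-algebra route. Either would serve as a complete proof where the paper instead chose to cite.
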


\section{\texorpdfstring{$\RO$}{roABP} Factor Non-Closure}
\label{sec:factor-non-closure}

To prove Theorem \ref{thm:roabp-non-closure}, we need a polynomial of low $\RO$ complexity, that has a factor of high $\RO$ complexity.
We will use explicit expander graphs for this purpose.
The only property we require from the expander graph is that, for any sufficiently large partition of its vertex set into two parts, it contains a large induced matching between the two parts.

\begin{lemma}[Induced Matching Lemma] \label{lem:expander-matching}
    For every $n \in \N$ there exists a constant degree graph $G_n = (V,E)$ on $n$ vertices such that the following holds: for any partition $(S, T)$ of $V$ with $|S| = \epsilon n$ and $|T| = (1 - \epsilon) n$ where \(\epsilon \in \left[ \tfrac{1}{3},\tfrac{2}{3}\right]\), the graph contains $\Omega(n)$ edges between $S$ and $T$ that form an induced matching.
\end{lemma}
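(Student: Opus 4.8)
I would prove this via the standard connection between spectral expansion and induced matchings, using the Expander Mixing Lemma as the paper's introduction hints. First I would fix $G_n$ to be an explicit family of $d$-regular expanders (say Ramanujan graphs, or the explicit constructions of Lubotzky–Phillips–Sarnak, or zig-zag product graphs) with second-largest eigenvalue $\lambda = \lambda(G_n)$ bounded by a constant $\lambda < d$ independent of $n$. The point of the constant degree is that we then have the Expander Mixing Lemma in the form: for any two disjoint sets $A, B \subseteq V$,
\[
\left| e(A,B) - \frac{d\,|A|\,|B|}{n} \right| \;\le\; \lambda \sqrt{|A|\,|B|}.
\]

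**Main steps.** Given the partition $(S,T)$ with $|S| = \epsilon n$, $|T| = (1-\epsilon)n$ and $\epsilon \in [1/3, 2/3]$, I would build the induced matching greedily. The idea: repeatedly pick an edge $\{s,t\}$ with $s \in S$, $t \in T$, add it to the matching $M$, then delete from further consideration the endpoints $s, t$ together with all their neighbours (this is what forces the matching to be \emph{induced} — no two chosen edges can have adjacent endpoints). Each step removes at most $2(d+1) = O(1)$ vertices. So as long as enough cross edges survive, we keep going. To lower-bound how long this lasts: suppose we have so far removed a vertex set $W$ with $|W| \le c \cdot |M|$ for $c = 2(d+1)$. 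Let $S' = S \setminus W$, $T' = T \setminus W$. By Expander Mixing, the number of surviving cross edges is at least $\frac{d |S'||T'|}{n} - \lambda\sqrt{|S'||T'|}$. Since $|S|, |T| \ge n/3$, as long as $|M| \le \delta n$ for a sufficiently small constant $\delta$ (so that $|S'|, |T'| \ge n/3 - c\delta n \ge n/4$ say), this quantity is at least $\frac{d \cdot (n/4)^2}{n} - \lambda \cdot \frac{n}{2} = \left(\frac{d}{16} - \frac{\lambda}{2}\right) n$. For this to be positive we need $\lambda < d/8$; since we may choose the expander family to make $\lambda/d$ as small as we like (or just take $d$ large enough relative to $\lambda$, e.g. via a suitable explicit family), this holds. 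Hence whenever $|M| \le \delta n$ there is still a cross edge available and we can extend $M$; therefore the process only terminates with $|M| = \Omega(n)$, as claimed.

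**The main obstacle.** The only real subtlety is bookkeeping the constants so that the greedy deletion argument does not run out of cross edges prematurely — i.e. choosing $\delta$ small enough relative to $d$, $\lambda$, and verifying the chain of inequalities $|S'|, |T'| \ge n/4$ and then positivity of the Expander Mixing lower bound. This is routine once the parameters are pinned down, provided we start from an expander family with a sufficiently good spectral gap; the explicit families cited above (e.g. by taking appropriate powers of a base expander, or LPS Ramanujan graphs of degree $d$ with $\lambda \le 2\sqrt{d-1}$ and $d$ a large constant) comfortably satisfy $\lambda < d/8$. A secondary point to state carefully is the definition of "induced matching" being used: $M$ is an induced matching of $G_n$ if the only edges of $G_n$ with both endpoints in $V(M)$ are the edges of $M$ itself — which is exactly what the neighbour-deletion step guarantees, since after adding $\{s,t\}$ we remove everything adjacent to $s$ or $t$, so no later edge can touch $s$ or $t$ and no edge internal to $V(M)$ other than those in $M$ can exist. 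Finally I would note that all of this is insensitive to the field (it is purely graph-theoretic), matching the "over any field" claim of Theorem~\ref{thm:roabp-non-closure} that this lemma feeds into.
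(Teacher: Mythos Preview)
Your proposal is correct and follows essentially the same approach as the paper: choose an explicit constant-degree spectral expander, then use the Expander Mixing Lemma together with a greedy pick-and-delete argument to extract a large induced matching across any balanced cut. The paper's proof is only a sketch that outsources the greedy step to \cite[Claim 4]{jukna2008} and records the spectral threshold $\lambda < k/3$, whereas you spell out the greedy deletion argument and arrive at the (equally sufficient) threshold $\lambda < d/8$; the difference is purely in bookkeeping constants.
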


\begin{proof}[Proof Sketch]
    There exists an absolute constant $\delta \in (0,1)$ such that, for any $k \in \N$ with $k \geq 1$, we can construct explicit $k$-regular expander graphs $G_n = (V,E)$ whose second-largest eigenvalue is at most $k^\delta$; see~\cite{RVW2000}. 
    
    When $k$ is chosen to be sufficiently large such that the second-largest eigenvalue of $G_n$ is strictly smaller than $k / 3$, then the lemma follows as an easy consequence of the Expander Mixing Lemma~\cite{AlonC06} (see also \cite[Lemma 2.5]{HLW06}). See, for example, \cite[Claim 4]{jukna2008}.
\end{proof}

Define an $n$-variate polynomial $P_{G}$ associated with constant degree graph $G_n = (V,E)$ guaranteed by \autoref{lem:expander-matching} as follows: 
\begin{equation}
    P_G ~ \coloneqq ~ \prod_{(i,j) \in E} \inparen{ (x_ix_j)^d - 1}. \label{eq:poly-expander}
\end{equation} 
Since the degree of the graph is constant, the sparsity of $P_G$ is $2^{|E|} = 2^{O(n)} =: w$. 
Therefore, $P_G$ can be computed by an $\RO$ of width $w$ in every variable order.
To prove the hardness of its factor, consider the following polynomial $Q_G$: 
\begin{equation}
    Q_G ~ \coloneqq ~ \prod_{(i,j) \in E} \inparen{1 + \inparen{x_ix_j} + \inparen{x_ix_j}^2 + \ldots + \inparen{x_ix_j}^{d-1}}. \label{eq:poly-expander-factor}
\end{equation}
It is well known that $(1+ x + x^2 + \ldots + x^{d-1})(x-1) = x^d - 1$. 
Using the identity, we immediately obtain $$P_G \;=\; Q_G \cdot \prod_{(i,j) \in E} \left(x_i x_j - 1\right).$$

\begin{lemma}\label{lem:factor-roabp}
    The polynomial $Q_G$ defined in \autoref{eq:poly-expander-factor} requires an $\RO$ of width $d^{\Omega(n)}$ in every variable order.
\end{lemma}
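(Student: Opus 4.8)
The plan is to invoke Nisan's rank characterization (\autoref{thm:roabp-char}): it suffices to fix an \emph{arbitrary} variable order $\pi$ and exhibit a single prefix partition $Y \sqcup Z$ of $\{x_1,\dots,x_n\}$ along which the Nisan matrix $M_{Y,Z}(Q_G)$ has rank $d^{\Omega(n)}$. I would take $Y := \{x_{\pi(1)},\dots,x_{\pi(i)}\}$ with $i := \lfloor n/2 \rfloor$ and $Z := \{x_{\pi(i+1)},\dots,x_{\pi(n)}\}$. Identifying variables with the vertices of $G$, this is a partition of $V(G)$ into parts of sizes $\epsilon n$ and $(1-\epsilon)n$ with $\epsilon \in [1/3,2/3]$ for all sufficiently large $n$, so \autoref{lem:expander-matching} supplies an induced matching $M = \{(i_1,j_1),\dots,(i_t,j_t)\}$ with $t = \Omega(n)$ and $i_\ell \in Y$, $j_\ell \in Z$ for all $\ell$.

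Next I would exhibit a large identity submatrix inside $M_{Y,Z}(Q_G)$. Index a set of rows by tuples $\veca \in \{0,\dots,d-1\}^t$ via the $Y$-monomial $m_Y^{\veca} := \prod_{\ell} x_{i_\ell}^{a_\ell}$, and a set of columns by tuples $\vecb \in \{0,\dots,d-1\}^t$ via the $Z$-monomial $m_Z^{\vecb} := \prod_{\ell} x_{j_\ell}^{b_\ell}$. The claim is that
\[
\coef_{m_Y^{\veca}\, m_Z^{\vecb}}(Q_G) \;=\; \begin{cases} 1, & \veca = \vecb,\\ 0, & \veca \neq \vecb, \end{cases}
\]
so this $d^t \times d^t$ submatrix is the identity. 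To prove it, write $Q_G = \big(\prod_{\ell\in[t]} q_{i_\ell j_\ell}\big)\cdot\big(\prod_{(i,j)\in E\setminus M} q_{ij}\big)$, where $q_{ij} := \sum_{k=0}^{d-1}(x_ix_j)^k$. Since $M$ is an induced matching, no edge of $E\setminus M$ has both endpoints in $V(M)$, so every factor $q_{ij}$ with $(i,j)\in E\setminus M$ has a variable --- call it $x_v$ --- that does not occur in $m_Y^{\veca}\, m_Z^{\vecb}$; because $q_{ij}$ is a univariate in the single product $x_ix_j$, its only term of $x_v$-degree $0$ is the constant $1$, so each such factor contributes exactly $1$ to the coefficient. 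Hence $\coef_{m_Y^{\veca}\, m_Z^{\vecb}}(Q_G) = \coef_{m_Y^{\veca}\, m_Z^{\vecb}}\big(\prod_{\ell} q_{i_\ell j_\ell}\big)$, and since the $t$ factors $q_{i_\ell j_\ell}$ use pairwise disjoint variable pairs this coefficient factors as $\prod_{\ell} \coef_{x_{i_\ell}^{a_\ell} x_{j_\ell}^{b_\ell}}(q_{i_\ell j_\ell}) = \prod_{\ell} [a_\ell = b_\ell]$, proving the claim. (Alternatively: setting every variable outside $V(M)$ to $0$ collapses $Q_G$ to $\prod_{\ell} q_{i_\ell j_\ell}$, which is precisely the polynomial $\tilde f$ of \autoref{ob:rochar-example} on $t$ pairs of variables, and restricting variables to field constants does not increase the rank of the Nisan matrix.)

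Putting these together, $\rank(M_{Y,Z}(Q_G)) \geq d^t = d^{\Omega(n)}$, so by \autoref{thm:roabp-char} every $\RO$ computing $Q_G$ in the order $\pi$ has width at least $d^{\Omega(n)}$; since $\pi$ was arbitrary and the identity submatrix has full rank over every field, the lemma follows. I expect the one delicate point to be the coefficient computation above --- ruling out that the factors $q_{ij}$ of non-matching edges spuriously inject powers of the matching variables $x_{i_\ell}, x_{j_\ell}$ into the monomial $m_Y^{\veca}\, m_Z^{\vecb}$. This is exactly where both hypotheses enter: the induced-matching property forces every non-matching edge to touch a variable absent from the monomial, and the special shape of each $q_{ij}$ (a function of the single product $x_ix_j$) forces such a factor, once constrained to have degree $0$ in that variable, to contribute only its constant term.
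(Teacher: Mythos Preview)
Your proposal is correct and follows essentially the same approach as the paper: fix an arbitrary order, take the balanced prefix partition, invoke \autoref{lem:expander-matching} to get an induced matching of size $t=\Omega(n)$, and argue that the Nisan matrix restricted to the matching variables has rank $d^t$. The paper does the last step by zeroing out the non-matching variables and appealing to \autoref{ob:rochar-example} (which you mention as your alternative), whereas you spell out the coefficient computation directly; these are the same argument at different levels of detail.
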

\begin{proof}
    Let $\pi$ be any variable order on the variables $X = \{x_1, \ldots, x_n\}$, and consider the partition $Y = \{x_{\pi(1)}, \ldots, x_{\pi(n/2)}\}$ and $Z = X \setminus Y$.

    Let $Y$ and $Z$ also denote the partition of vertices of $G_n$. 
    Then from \autoref{lem:expander-matching}, we know there exists an induced matching $\calM$ between $Y$ and $Z$ of size $\Omega(n)$. Define  
    \begin{align*}
        \tilde{f} \; &\coloneqq \; \prod_{(i,j) \in \calM} \inparen{1 + \inparen{x_ix_j} + \inparen{x_ix_j}^2 + \ldots + \inparen{x_ix_j}^{d-1}} \\
        \; &=\; \; \prod_{i \in [t]} \inparen{1 + \inparen{y_{i}z_{i}} + \inparen{y_{i}z_{i}}^2 + \ldots + \inparen{y_{i}z_{i}}^{d-1}},
    \end{align*}
    where for every $i \in [t]$, $y_i$ is a variable in $Y$ and $z_i$ is a variable in $Z$, and $t = \Omega(n)$. Here we have used the fact that $\calM$ is an induced matching. In particular, $\tilde f$ is obtained from $Q_G$ by setting to zero the variables which are not in the matching $\calM$. Hence, the rank of the Nisan matrix can only decrease. 
    Finally, by \autoref{ob:rochar-example},
    $$\rank\left(M_{Y,Z}(Q_G)\right) \;\geq\; \rank\left(M_{Y,Z}\left(\tilde{f}\right)\right) \;\geq\; d^{\Omega(n)}.$$
    We obtain the claimed lower bound for width of $\RO$ computing $Q_G$ by \autoref{thm:roabp-char}.
\end{proof}

We will now use the discussion so far to give the complete proof of the factor non-closure result.

\roabpnonclosure*
\begin{proof}
    Consider an $n$-variate polynomial $g \coloneqq Q_{G_{n-1}} + z$, where $z$ is an auxiliary variable and $Q_{G_{n-1}}$ is defined as in \autoref{eq:poly-expander-factor} using a constant-degree graph $G_{n-1}$.
    We then define
    $$ f \;\coloneqq\; g \cdot \prod_{(i,j) \in E} \left(x_i \cdot x_j - 1\right) \; = \; P_G + z \cdot \prod_{(i,j) \in E} \left(x_i \cdot x_j - 1\right).$$

    As argued after \autoref{eq:poly-expander}, both $P_G$ and $\prod_{(i,j) \in E} (x_i \cdot x_j - 1)$ have sparsity $2^{O(n)}$ and hence we can compute them by an $\RO$ of width $w = 2^{O(n)}$ in every variable order. Therefore, $f$ itself admits an $\RO$ of width $w$ in every variable order.

    Observe that $g$ is an irreducible polynomial because it is linear in the auxiliary variable $z$.\footnote{See \cite[Example 5.1]{GK1985} where the hardness is lifted to irreducible factor by considering $g = Q_G + n$.}
    Further, by \autoref{lem:factor-roabp}, any $\RO$ computing $Q_G + z$ must have width at least $d^{\Omega(n)}$ in every variable order. 
    Since $d \geq n$, the claimed width lower bound for $\RO$ computing $g$ follows.
\end{proof}

\section{\texorpdfstring{$\RO$}{RO} Complexity of Symmetric Polynomials}
\label{sec:sym-poly-complexity}

In the following two sections we prove \autoref{thm:sym-forward} and \autoref{thm:sym-backward} along with their corollaries. 

\subsection{\texorpdfstring{$f_\sym$}{Sym Poly} is easy, but \texorpdfstring{$f$}{f} is hard}

In this section, we work over fields of characteristic zero. 
For the proof of \autoref{thm:sym-forward}, we consider $f_{\sym} = \sum_{d=0}^n \ESym_n^d(x_1^k, \ldots, x_n^k)$ for a suitable choice of $k \in [n]$ to be fixed later.

Let us consider the polynomial 
\begin{align*}
    g(y_1,\ldots,y_n,t,z_0,\ldots,z_{k-1})  \; \coloneqq \; \prod_{j = 0}^{k-1} \left( 1 + \sum_{i \in [n]} y_i \cdot \left(t\cdot z_j\right)^i \right).
\end{align*}
The polynomial \(g\) is symmetric in the variables \(z_0,\ldots,z_{k-1}\). So by Theorem~\ref{thm:fundamental-sym}, there exists a polynomial \(\tilde{g} \in \Z[y_1,\ldots,y_n,t,z_0,\ldots,z_{k-1}]\) such that \(g(\vecy,t,\vecz) = \tilde{g}(\vecy,t,e_1(\vecz),\ldots,e_k(\vecz))\). Notice that if \(\omega\) is a \(k\)-th primitive root of the unity, \autoref{eq:esym-product} implies 
\begin{align*}
    \begin{cases}
        e_i(\omega^0,\ldots,\omega^{k-1}) = 0 & \text{for }1\le i<k, \\
        e_k(\omega^0,\ldots,\omega^{k-1}) = 1. &
    \end{cases}
\end{align*}
Let us define 
\begin{equation}
    f(y_1,\ldots,y_n) \;\coloneqq\; \tilde{g}(\vecy,1,0,\ldots,0,1) \in \Z[\vecy].\label{eq:circulant-poly}
\end{equation}
The previous paragraph ensures that for any \(k\)-th primitive root of the unity \(\omega\), we have
\begin{equation}
        f(y_1, \dots, y_n) \;=\;  \prod_{j = 0}^{k-1} \left( 1 + \sum_{i \in [n]} y_i \cdot \omega^{j\cdot i} \right). \label{eq:circulant-poly-eq}
\end{equation}

The following lemma shows that \(f\) is indeed the unique polynomial inducing the symmetric polynomial \(\sum_{d=0}^n \ESym_n^d(x_1^k, \ldots, x_n^k)\). 
The following is an argument in~\cite{achille2021}, which we reproduce here for completeness.

\begin{lemma*}[Circulant Polynomial]
    For any $n\in \N$ and odd positive integer $k \leq n$, 
    $$f_\sym(\vecx) 
    \;\coloneqq\; f(e_1(\vecx), \ldots, e_n(\vecx))
    \;=\; \sum_{d=0}^{n} \, \ESym_n^d\left(x_1^k, \dots, x_n^k\right).$$
\end{lemma*}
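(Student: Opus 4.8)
The plan is to substitute $y_i = e_i(\vecx)$ directly into the product formula \autoref{eq:circulant-poly-eq} and then repeatedly exploit the elementary symmetric generating function \autoref{eq:esym-product}. Fix a primitive $k$-th root of unity $\omega$, so that $f_\sym(\vecx) = \prod_{j=0}^{k-1}\bigl(1 + \sum_{i \in [n]} e_i(\vecx)\,\omega^{ji}\bigr)$ by \autoref{eq:circulant-poly-eq}. The first step is to observe that, for each fixed $j$, the inner expression is exactly $\sum_{i=0}^{n} e_i(\vecx)\,(\omega^j)^i$ (using $e_0 = 1$), which by \autoref{eq:esym-product} evaluated at $t = \omega^j$ equals $\prod_{\ell \in [n]}(1 + x_\ell\,\omega^j)$. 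Substituting this in and interchanging the two finite products gives
\[
    f_\sym(\vecx) \;=\; \prod_{\ell \in [n]} \ \prod_{j=0}^{k-1} \bigl(1 + x_\ell\,\omega^j\bigr).
\]

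The heart of the argument is the single-variable identity $\prod_{j=0}^{k-1}(1 + x\,\omega^j) = 1 + x^k$, which holds precisely because $k$ is odd. I would prove it by writing $1 + x\,\omega^j = \omega^j\,(x + \omega^{-j})$, so the product factors as $\bigl(\prod_{j=0}^{k-1}\omega^j\bigr)\cdot\prod_{j=0}^{k-1}(x + \omega^{-j})$. The scalar equals $\omega^{k(k-1)/2} = (\omega^k)^{(k-1)/2} = 1$, using that $(k-1)/2 \in \N$ since $k$ is odd. For the remaining product, $\{\omega^{-j} : 0 \le j < k\}$ is the full set of $k$-th roots of unity, so substituting $T = -x$ into $\prod_{\zeta^k = 1}(T - \zeta) = T^k - 1$ and simplifying (again using $k$ odd) yields $\prod_{\zeta^k = 1}(x + \zeta) = x^k + 1$. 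Multiplying the two pieces gives the claim; note this is exactly where the oddness hypothesis is used, since for even $k$ one would instead get $1 - x^k$.

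Plugging the identity back in gives $f_\sym(\vecx) = \prod_{\ell \in [n]}(1 + x_\ell^k)$, and one final application of \autoref{eq:esym-product}---now in the variables $x_1^k, \dots, x_n^k$ and with $t = 1$---rewrites this as $\sum_{d=0}^{n} \ESym_n^d(x_1^k, \dots, x_n^k)$, as required; uniqueness of $f$ is then automatic from \autoref{thm:fundamental-sym}. I expect the only subtle point to be the root-of-unity bookkeeping in the single-variable identity, and everything else is a mechanical unwinding of \autoref{eq:esym-product}. If one prefers to sidestep the $\omega^{-j}$ reindexing, an alternative is to evaluate $\prod_{j=0}^{k-1}(T - \omega^j) = T^k - 1$ at $T = -1/x$ over the fraction field $\F(x)$ and clear denominators, obtaining $\prod_{j=0}^{k-1}(1 + x\,\omega^j) = 1 - (-1)^k x^k$ directly.
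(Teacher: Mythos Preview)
Your proof is correct and follows essentially the same approach as the paper: both arguments combine the product formula \autoref{eq:circulant-poly-eq}, the generating function \autoref{eq:esym-product}, and the root-of-unity factorization $\prod_{j=0}^{k-1}(1+x\omega^j)=1+x^k$ for odd $k$. The paper runs the computation in the reverse direction (starting from $\prod_{i}(1-x_i^k(-t)^k)$ with an auxiliary variable $t$ and specializing $t=1$ at the end), but the content is identical; your version is a bit more explicit about where the oddness of $k$ enters.
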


\begin{proof}
    Let \(\omega\) be a \(k\)-th primitive root of the unity. 
    Using the factorization identity ${(1 - t^k) = \prod_j (1 - \omega^j \cdot t)}$ together with \autoref{eq:esym-product} we obtain the following:
    \begin{align}
        \prod_{i \in [n]} \left(1 - x_i^k \cdot (-t)^k \right) \; &= \; \prod_{i =1}^n \, \prod_{j = 0}^{k-1} \, \left( 1 + \omega^j \cdot x_i \cdot t \right)\notag\\
         \; &= \; \prod_{j = 0}^{k-1} \, \left( 1 + \sum_{i =1}^{n} e_i(\vecx) \cdot (\omega^j \cdot t)^i\right).\notag
    \end{align}
    By \autoref{eq:circulant-poly-eq} and instantiating \(t\) by \(1\), we obtain
    \begin{equation*}
        \prod_{i \in [n]} \left(1 + x_i^k \right) \; = \; f \left(e_1(\vecx), \ldots, e_n(\vecx)\right). \qedhere
    \end{equation*}
    
\end{proof}


We call the polynomial $f$ a circulant polynomial because it is closely related to the determinant of a Circulant matrix.\footnote{Specifically, in the case $k=n$, the homogeneous component of degree $k$ of the polynomial \(f\) is exactly the determinant of the circulant matrix of first row $(x_1,\ldots,x_n)$.}

In the following lemma, we show that the polynomial $f$ is hard for $\RO$ in every variable order over any field \(\F\) of characteristic \(0\).
\begin{lemma} 
\label{lem:sym-poly-evaldim}
    For any prime $k$ with $2 \leq k \leq n$, the $n$-variate polynomial $f$ defined in \autoref{eq:circulant-poly} requires $\RO$ width at least $2^{(k-1)/2}$ in any variable order.
\end{lemma}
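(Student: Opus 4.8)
The goal is to show that $f$, as defined in \autoref{eq:circulant-poly}, has large evaluation dimension in \emph{every} variable order; by \autoref{thm:evaldim-ro} (the lower-bound direction) this gives the $\RO$ width bound. Since $f$ is a symmetric polynomial (it equals $f_\sym$ of the power-sum-like symmetric polynomial, but more directly $f(y_1,\dots,y_n) = \prod_{j=0}^{k-1}(1 + \sum_{i\in[n]} y_i \omega^{ji})$ is visibly invariant under permuting the $y_i$), the evaluation dimension is the same for any balanced split of the variables. So I would fix the partition $Y = \{y_1,\dots,y_{\lceil n/2\rceil}\}$, $Z = \{y_{\lceil n/2\rceil+1},\dots,y_n\}$ (or even a smaller $Y$; we only need roughly $k$ variables on the $Y$ side, which is fine since $k \le n$), and lower-bound $\evalDim_{Y,Z}(f)$.

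**Extracting structure from the product form.** The key observation is that $f$ factors as $\prod_{j=0}^{k-1} L_j(\vecy)$ where $L_j(\vecy) = 1 + \sum_{i\in[n]} \omega^{ji} y_i$ is an affine form. Split each $L_j$ as $L_j = A_j(Y) + B_j(Z)$ where $A_j(Y) = 1 + \sum_{i \le \lceil n/2\rceil} \omega^{ji} y_i$ collects the $Y$-part (with the constant $1$) and $B_j(Z) = \sum_{i > \lceil n/2\rceil} \omega^{ji} y_i$ collects the $Z$-part. Then $f = \prod_{j=0}^{k-1}(A_j + B_j)$ expands, and the span of $\{f(Y,\veca) : \veca\}$ is contained in — and I will argue essentially equals, in the relevant coordinates — the span of products $\prod_{j\in S} A_j(Y)$ over subsets $S \subseteq \{0,\dots,k-1\}$. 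The plan is to show that these $2^k$ products, or at least a $2^{(k-1)/2}$-sized subfamily of them, are linearly independent as polynomials in $Y$, and that each appears with a linearly independent coefficient pattern as $\veca$ ranges over $\F^{|Z|}$. Concretely: substituting a point $\veca$ for $Z$ makes each $B_j(\veca)$ a scalar, so $f(Y,\veca) = \prod_j (A_j(Y) + B_j(\veca)) = \sum_{S} \big(\prod_{j\notin S} B_j(\veca)\big) \prod_{j\in S} A_j(Y)$. I would then argue that (a) the $A_j$ (Vandermonde-type affine forms built from distinct powers of $\omega$, plus the constant term) are generic enough that enough of the products $\prod_{j\in S} A_j$ are linearly independent, and (b) the scalar vectors $\big(\prod_{j\notin S} B_j(\veca)\big)_S$ span a space of dimension at least $2^{(k-1)/2}$ as $\veca$ varies — this is again a Vandermonde/coefficient-extraction argument on the $B_j$'s.

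**Where the $2^{(k-1)/2}$ comes from.** The exponent $(k-1)/2$ rather than $k$ strongly suggests we do \emph{not} get all $2^k$ subsets, but only subsets of a structured type — most likely we restrict to one half of the $L_j$'s (say indices $j = 1,\dots,(k-1)/2$, using that $k$ is prime hence odd so $(k-1)/2$ is an integer) and use a symmetry pairing $\omega^j \leftrightarrow \omega^{k-j} = \overline{\omega^j}$ to handle complex-conjugate issues, or we pair up the $L_j$ so that only $2^{(k-1)/2}$ of the subset-products are provably independent. Alternatively, perhaps we look at the homogeneous degree-$(k-1)/2$ (or similar) component and count monomials. I would try: consider the coefficient, in $f(Y,\veca)$, of multilinear monomials $\prod_{i\in T} y_i$ for $T$ of a fixed size $\approx (k-1)/2$ on the $Y$-side; write the resulting $|Z$-eval$| \times |T\text{-choices}|$ matrix; show via a Vandermonde/circulant determinant computation (this is the natural reason the circulant determinant enters) that it has rank at least $2^{(k-1)/2}$.

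**Main obstacle.** The crux — and the step I expect to be genuinely delicate — is establishing the linear independence of the relevant family of products $\prod_{j\in S} A_j(Y)$ (equivalently, showing a suitable submatrix of Nisan/evaluation data has full rank). Naively the $A_j$'s are algebraically dependent (they are $k$ affine forms in $n$ variables, and products of them lie in a low-dimensional space once $n$ is not much larger than the number of distinct monomials available), so one must carefully choose which subsets $S$ to use and which monomials of $Y$ to read off, exploiting the specific $\omega^{ji}$ structure so that a Vandermonde-like minor is nonzero. Making the counting match exactly $2^{(k-1)/2}$, and verifying that characteristic $0$ (so that the relevant Vandermonde/circulant determinants, built from roots of unity, are nonzero) is all that is needed, will require the most care. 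I would handle it by isolating a clean combinatorial core: index the chosen subsets by binary strings of length $(k-1)/2$, match them to distinct multilinear $Y$-monomials and distinct $Z$-evaluation points, and reduce the independence to nonvanishing of a tensor/Kronecker product of $2\times 2$ Vandermonde blocks, each of which is invertible over a characteristic-zero field.
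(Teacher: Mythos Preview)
Your setup contains a fundamental error: $f$ is \emph{not} symmetric in $y_1,\ldots,y_n$. The polynomial $f_{\sym}(\vecx)=f(e_1(\vecx),\ldots,e_n(\vecx))$ is symmetric in $\vecx$, but $f$ itself is not; swapping $y_1$ and $y_2$ in $L_j=1+\sum_i\omega^{ji}y_i$ does not permute the factors $L_0,\ldots,L_{k-1}$ among themselves. So you cannot fix a single convenient partition and claim the bound for every order follows. The paper handles this correctly by working with an \emph{arbitrary} order $\pi$, restricting first to $k$ variables (setting the rest to $0$), and taking $U$ to be the first $(k-1)/2$ variables in that order.

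The second, deeper gap is the linear-independence step, which you yourself flag as delicate but do not resolve. The obstacle is real: for an arbitrary partition $(U,V)$ the $U$-parts $\ell_j(\vecu)$ of the $k$ factors are $k$ linear forms in only $(k-1)/2$ variables, so at most $(k-1)/2$ of them can be independent, and a generic Vandermonde argument is not enough to control \emph{which} ones. The paper's key tool here is Chebotar\"ev's theorem: when $k$ is prime, \emph{every} square submatrix of the $k\times k$ DFT matrix $(\omega^{ji})$ is nonsingular. This is used twice. First, to find a set $A$ of $(k-1)/2$ row indices so that $\{\ell_j(\vecu)\}_{j\in A}$ are independent (after a change of variables, $\ell_j(\vecu)=u_j$). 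Second, and more crucially, for each $B\subseteq A$ to find an evaluation point $\alpha_B$ on the $V$-side that kills exactly the factors indexed by $B$ and no others; then $f'(\vecu,\alpha_B)$ has lowest-degree monomial $\prod_{j\in B}u_j$, and distinct lowest monomials give $2^{|A|}=2^{(k-1)/2}$ independent evaluations directly. Your speculation that the exponent $(k-1)/2$ arises from a conjugate pairing $\omega^j\leftrightarrow\omega^{k-j}$ is off; it simply comes from the partition size $|U|=(k-1)/2$, and the primality hypothesis is there precisely to invoke Chebotar\"ev.
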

\begin{proof}
    By instantiating a variable, the roABP width can only decrease. So it is sufficient to consider \(f'(y_1,\ldots,y_k) \coloneqq f(y_1,\ldots,y_k,0,\ldots,0)\).
    
    By the standard evaluation-dimension lower bound for $\RO$, it suffices to show the following. For any variable order $\pi$ and the variable partition $(U,V)$ where $U=\{y_{\pi(1)},\dots,y_{\pi( (k-1)/2)}\}$ and $V=\{y_{\pi( (k-1)/2+1)},\dots,y_{\pi(k)}\}$ we have 
    \[
    \evalDim_{U,V}(f') \;=\; \rank \left( \setdef{f'(\vecu, \veca)}{\veca \in \F^{|V|}} \right)\ge\;2^{\Omega(k)}.
    \]

    Re-writing \autoref{eq:circulant-poly-eq} in terms of variable partition $(U,V)$, we have 
    \begin{equation}
        f'(\vecu, \veca) \; = \; \prod_{j = 0}^{k-1} \bigg( \ell_j(\vecu) + \ell'_j(\veca) + 1 \bigg), \label{eq:circulant-var-split}
    \end{equation}
    where $\ell_j(\vecu)$ and $\ell'_j(\vecv)$ are linear polynomials in $U$ and $V$, respectively. 
    
    Arranging the coefficients in the linear forms $\{\ell_j(\vecu) + \ell'_j(\vecv)\}_{j}$ 
    as the rows of a $k\times k$ matrix yields a matrix $M$ whose $(j,i)$-th entry is $\omega^{j\cdot \pi(i)}$ for $j\in \{0,\ldots,k-1\}$ and $i \in [k]$. Note that $M$ can be obtained from the standard $k\times k$ DFT matrix $(\omega^{j\cdot i})_{i,j\in [k]}$ by permuting columns.

    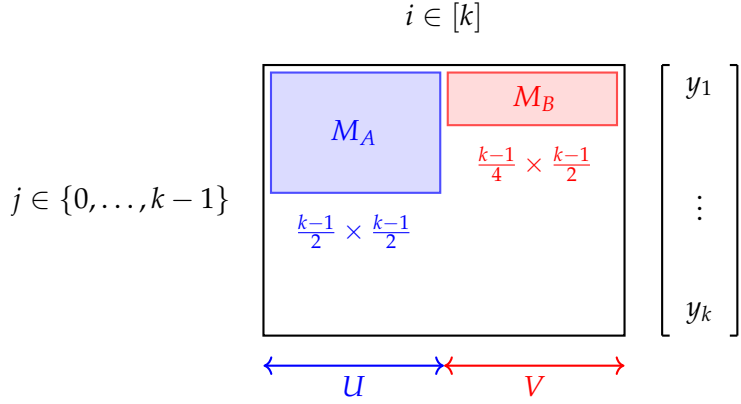
\begin{figure}[ht]
    \centering
    \begin{tikzpicture}
      \pgfmathsetmacro{\k}{7}
      \pgfmathsetmacro{\matrows}{6}
      \pgfmathsetmacro{\matcols}{6}
      \pgfmathsetmacro{\cellw}{0.8}
      \pgfmathsetmacro{\cellh}{0.6}
      \pgfmathsetmacro{\belowpad}{1.0}
      \pgfmathsetmacro{\abovepad}{0.8}
      
      \path[use as bounding box]
        (0, -\belowpad) rectangle ({\matcols*\cellw}, {\matrows*\cellh + \abovepad});
    
      \draw[thick] (0,0) rectangle ({\matcols*\cellw}, {\matrows*\cellh});
    
      \node[left] at (-0.3, {\matrows*\cellh/2}) {$j \in \{0,\dots,k-1\}$};
      \node[above] at ({\matcols*\cellw/2}, {\matrows*\cellh + 0.3}) {$i \in [k]$};
    
      \pgfmathsetmacro{\varows}{3}
      \pgfmathsetmacro{\vacols}{3}
      \pgfmathsetmacro{\gap}{0.1}
      \path[fill=blue!20, opacity=0.7, draw=blue, thick]
        ({\gap}, {\matrows*\cellh - \varows*\cellh + \gap})
        rectangle ({\vacols*\cellw - \gap/2}, {\matrows*\cellh - \gap});
      \node[blue, font=\bfseries] 
        at ({\vacols*\cellw/2}, {\matrows*\cellh - \varows*\cellh/2}) {$M_A$};
    
      \pgfmathsetmacro{\vbrows}{1.5}
      \pgfmathsetmacro{\vbcols}{3}
      \path[fill=red!20, opacity=0.7, draw=red, thick]
        ({\vacols*\cellw + \gap/2}, {\matrows*\cellh - \vbrows*\cellh + \gap})
        rectangle ({(\vacols+\vbcols)*\cellw - \gap}, {\matrows*\cellh - \gap});
      \node[red, font=\bfseries] 
        at ({(\vacols+\vbcols/2)*\cellw}, {\matrows*\cellh - \vbrows*\cellh/2}) {$M_B$};
    
      \pgfmathsetmacro{\vecx}{\matcols*\cellw + 0.5}
      \pgfmathsetmacro{\vecw}{1}
      \draw[thick] (\vecx,0) -- (\vecx,{\matrows*\cellh});
      \draw[thick] (\vecx+\vecw,0) -- (\vecx+\vecw,{\matrows*\cellh});
      \draw[thick] (\vecx,0) -- (\vecx+0.1,0);
      \draw[thick] (\vecx,{\matrows*\cellh}) -- (\vecx+0.1,{\matrows*\cellh});
      \draw[thick] (\vecx+\vecw-0.1,0) -- (\vecx+\vecw,0);
      \draw[thick] (\vecx+\vecw-0.1,{\matrows*\cellh}) -- (\vecx+\vecw,{\matrows*\cellh});
      \node at ({\vecx+\vecw/2}, {\matrows*\cellh - 0.5*\cellh}) {$y_1$};
      \node at ({\vecx+\vecw/2}, {\matrows*\cellh/2}) {$\vdots$};
      \node at ({\vecx+\vecw/2}, {0.5*\cellh}) {$y_k$};
    
      \pgfmathsetmacro{\arrowy}{-0.4}
      \draw[<->, thick, blue] (0,\arrowy) -- (\vacols*\cellw,\arrowy);
      \node[below, blue] at (\vacols*\cellw/2,\arrowy) {$U$};
      \draw[<->, thick, red] (\vacols*\cellw,\arrowy) -- (\matcols*\cellw,\arrowy);
      \node[below, red] at ({(\vacols+(\matcols-\vacols)/2)*\cellw},\arrowy) {$V$};
    
      \node[blue, below] 
        at (\vacols*\cellw/2, {\matrows*\cellh - \varows*\cellh - 0.05})
        {\small $\tfrac{k-1}{2} \times \tfrac{k-1}{2}$};
      \node[red, below] 
        at ({(\vacols+\vbcols/2)*\cellw}, {\matrows*\cellh - \vbrows*\cellh - 0.05})
        {\small $\tfrac{k-1}{4} \times \tfrac{k-1}{2}$};
    \end{tikzpicture}
    \caption{The matrix $M$ with $(j,i)$-th entry $\omega^{j\pi(i)}$, corresponding to the natural variable order. The highlighted submatrices $M_A$ and $M_B$ are used in the analysis of evaluation dimension.}
    \label{fig:matrix-structure}
    \end{figure}

    When $k$ is prime, Chebotarev’s theorem on roots of unity \cite{SL1996} states that every square submatrix of the DFT matrix (and hence also $M$) is nonsingular; see also \cite[Lemma1.3]{Tao2004} and \cite{Frenkel2004}.
    Since $|U| = (k-1)/2$, we can fix a subset $A \subseteq \{0,\ldots,k-1\}$ of size $(k-1)/2$ such that the set $\{\ell_j(\vecu)\}_{j \in A}$ corresponds to a square submatrix $M_A$ of $M$. Such a submatrix $M_A$ is non-singular due to Chebotarev's theorem.
    Consequently, the set $\{\ell_j(\vecu)\}_{j \in A}$ is linearly independent.
    We can assume that $\ell_i(\vecu)= u_i$ for $i \in A$, since an invertible linear transformation on the variables in $U$ does not change the evaluation dimension \(\evalDim_{U,V}(f')\).

    Consider any $B\subseteq A$. By Chebotarev's theorem, the submatrix $M_B$ with rows indexed by \(B\) and columns \(v_2,\ldots,v_{\lvert B\rvert +1}\) is invertible (the choice \(\lvert A \rvert = (k-1)/2\) ensures that \(V\) contains enough variables). So, for any \(b\in \F\), there is a unique point \(\beta_{B,b} \in \F^{\lvert B \rvert}\) such that for any \(j\) in \(B\), \(\ell'_j(b,\beta_{B,b},0,\ldots,0) + 1 = 0\).
    Similarly, for any other row index $\tilde{\jmath} \in \{0,\ldots,k-1\} \setminus B$, there exists a unique point \(\gamma_{B,\tilde{\jmath}} \in \F^{\lvert B\rvert +1}\)  such that for any \(j\) in \(B\cup\left\{\tilde{\jmath}\right\}\), we have \(\ell'_j(\gamma_{B,\tilde{\jmath}},0,\ldots,0) + 1 = 0\). It follows that \(\gamma_{B,\tilde{\jmath}}\) is of the form \((b_{\tilde{\jmath}},\beta_{B,b_{\tilde{\jmath}}})\) for a particular \(b_{\tilde{\jmath}} \in \F\).

    Since \(\F\) is infinite, we can choose \(b\) in \(\F\) outside of \(\setdef{b_{\tilde{\jmath}}}{\tilde{\jmath} \in \{0,\ldots,k-1\} \setminus B}\), and define \(\alpha_B \coloneqq (b,\beta_{B,b},0,\ldots,0)\). For any \(j\) in \(\{0,\ldots,k-1\}\):
    \begin{equation*}
        \ell'_j(\alpha_B)+1 = 0 \;\iff\; j \in B.
    \end{equation*}

    Consequently \(f'(\vecu,\alpha_B) \;=\; \left(\prod_{j\in B}\,u_j\right)\cdot\left(\prod_{j\in [k]\setminus B}(\ell_j(U) + c_{B,j})\right)\)
    where the \((c_{B,j})_{j \notin B}\) are non-zero constants.
    Since for each $B$, $f'(\vecu, \alpha_B)$ has a distinct lowest degree monomial \(\left(\prod_{j\in B}\,u_j\right)\), the set $\setdef{f'(\vecu, \alpha_B)}{B \subseteq A}$ is linearly independent.
    Therefore,
    \begin{align*}
        \evalDim_{U,V}\bigl(f')\bigr)\;&\ge\; \dim \setdef{f'(\vecu, \alpha_B)}{B \subseteq A} = 2^{(k-1)/2}.
    \end{align*}
    By the evaluation-dimension lower bound of \autoref{thm:evaldim-ro}, any $\RO$ computing $f$ (in any order) must have width at least $2^{(k-1)/2}$.
\end{proof}

\begin{remark}
    \label{rem:circulant}
    A close look at the above proof reveals that the lower bound also applies to the circulant polynomial $\prod_{j=0}^{k-1}(\sum_{i=1}^n y_i \omega^{ij})$ which is exactly the determinant of the circulant matrix. 
\end{remark}

The lower bound established in \autoref{lem:sym-poly-evaldim} serves as the key technical ingredient needed to prove \autoref{thm:sym-forward}.

\symforward*
\begin{proof}
    Fix $k$ to be a prime number between $n/2$ and $n$. We consider the symmetric polynomial $f_{\sym} \coloneqq \sum_{d = 0}^{n}\ESym_n^d(x_1^k, \dots, x_n^k)$.
    By applying \autoref{eq:esym-product} with each $x_i$ replaced by $x_i^k$, we obtain that $f_{\sym}$ admits an $\RO$ of constant width in every variable order.
    Moreover, by \autoref{lem:sym-poly-evaldim}, the width of any $\RO$ computing $f$ is at least $2^{(k-1)/2} = 2^{\Omega(n)}$.
\end{proof}

\subsection{\texorpdfstring{$f$}{f} is easy, but \texorpdfstring{$f_\sym$}{Sym Poly} is hard}
\label{sec:sym-backward}

To prove \autoref{thm:sym-backward}, it suffices to show the following technical lemma, which shows that taking powers of elementary symmetric polynomials is hard for $\RO$s. This lemma also implies \autoref{cor:power-roabp} from the introduction. In \autoref{sec:hard-power}, we will present an alternative proof of \autoref{thm:sym-backward} using a quadratic polynomial based on graph-based polynomial from~\autoref{sec:factor-non-closure}.

\begin{lemma}[Powers of $\ESym$] \label{lem:ESym-power}
    Let $k \leq n/2$. Any $\RO$ computing $\left(\ESym_n^k\right)^d$ in any variable order requires width at least $\binom{k + d}{k}$.
\end{lemma}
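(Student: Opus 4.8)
We want to lower-bound the $\RO$ width of $(\ESym_n^k)^d$ in an arbitrary variable order $\pi$. By Theorem~\ref{thm:roabp-char} (or Theorem~\ref{thm:evaldim-ro}), it suffices to exhibit one balanced-ish partition $Y \sqcup Z$ of the variables for which the Nisan matrix $M_{Y,Z}\bigl((\ESym_n^k)^d\bigr)$ — equivalently the evaluation dimension — has rank at least $\binom{k+d}{k}$. The natural choice is to split the variables into two halves $Y = \{x_{\pi(1)},\dots,x_{\pi(n/2)}\}$ and $Z = \{x_{\pi(n/2+1)},\dots,x_{\pi(n)}\}$, each of size at least $n/2 \geq k$.

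First I would use the variable-partition identity for elementary symmetric polynomials (Lemma~\ref{lem:esym-varpartition}): $\ESym_n^k(Y,Z) = \sum_{i=0}^{k} \ESym_{|Y|}^i(Y)\cdot \ESym_{|Z|}^{k-i}(Z)$. Raising to the $d$-th power and expanding multinomially, $(\ESym_n^k)^d = \sum_{\veca} \binom{d}{a_0,\dots,a_k} \prod_{i=0}^k \bigl(\ESym_{|Y|}^i(Y)\bigr)^{a_i}\bigl(\ESym_{|Z|}^{k-i}(Z)\bigr)^{a_i}$ over compositions $\veca = (a_0,\dots,a_k)$ with $\sum a_i = d$. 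Now stratify the summands by the $Y$-degree, which equals $e(\veca) := \sum_i i\cdot a_i \in \{0,1,\dots,kd\}$. For a fixed target degree $e$, the $Y$-part of the contribution is a polynomial in $Y$ (homogeneous of degree $e$) and the $Z$-part is homogeneous of degree $kd - e$ in $Z$. The key point is to choose a value of $e$ for which we can read off a large rank contribution in that graded block of the Nisan matrix — ideally $e$ for which the set of $Y$-polynomials $\{\prod_i (\ESym^i_{|Y|}(Y))^{a_i} : e(\veca)=e\}$ is linearly independent (and likewise on the $Z$ side), which would give rank equal to the number of such compositions.

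The cleanest route to the claimed bound $\binom{k+d}{k}$ is to pick the extreme degree $e = 0$: the only composition with $e(\veca) = 0$ is $\veca = (d,0,\dots,0)$, giving just a constant on the $Y$-side — that is too small. So instead I would look at $e = k$ (take degree-$k$ part in $Y$, degree $k(d-1)$ in $Z$) — but a more symmetric and cleaner choice is this: restrict attention to compositions $\veca$ supported on the two coordinates $i=0$ and $i=1$, i.e. $\veca = (d-b, b, 0,\dots,0)$ for $b = 0,1,\dots,d$; but that only gives $d+1$ terms. To reach $\binom{k+d}{k}$ I actually want all compositions $\veca$ supported on $\{0,1,\dots,k\}$ — there are exactly $\binom{k+d}{k}$ of them — and I need them to land in distinct "cells" of a suitable refinement of the Nisan matrix so their contributions cannot cancel. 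The trick: substitute a generic scaling $x_j \mapsto t_j x_j$ (or simply restrict $\ESym_n^k$ to $k$ variables in $Y$ and $k$ in $Z$ after setting the rest to generic constants) so that $\ESym^i_{|Y|}(Y)$ restricted to $k$ live $Y$-variables becomes, up to invertible change, the $i$-th elementary symmetric polynomial in $k$ variables, and these powers $\prod (e_i(Y))^{a_i}$ for distinct $\veca$ are linearly independent because the $e_i$ are algebraically independent (Theorem~\ref{thm:fundamental-sym}). The same on the $Z$ side. Since linear independence of the $Y$-families and of the $Z$-families forces the Nisan matrix to contain a $\binom{k+d}{k} \times \binom{k+d}{k}$ nonsingular submatrix (after a change of basis on rows and columns, the matrix of multinomial coefficients $\binom{d}{\veca}$ is diagonal-dominant, hence full rank), we conclude $\rank M_{Y,Z} \geq \binom{k+d}{k}$, and Theorem~\ref{thm:roabp-char} finishes the proof.

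**Main obstacle.** The delicate step is the independence/non-cancellation argument: after expanding $(\ESym_n^k)^d$ via the partition identity, different compositions $\veca$ can produce the \emph{same} $Y$-monomial support, so I cannot simply claim the Nisan matrix is a scaled permutation matrix. I expect to handle this by reducing to exactly $k$ variables on each side and invoking algebraic independence of $e_1,\dots,e_k$ to get that $\{\prod_i e_i^{a_i}\}_{\veca}$ is a linearly independent set of polynomials; then I realize each such product as a distinct row/column via evaluation at suitable points (using that the field is infinite / characteristic zero so the relevant Vandermonde-type or specialization matrices are nonsingular), and argue the resulting $\binom{k+d}{k}$ evaluation vectors are linearly independent. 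Characteristic zero is also needed so the multinomial coefficients $\binom{d}{a_0,\dots,a_k}$ are nonzero. If a fully clean "permutation matrix" picture is unavailable, the fallback is to order compositions $\veca$ by, say, lexicographic order and show the corresponding evaluation/coefficient matrix is triangular with nonzero diagonal — this triangularity argument is the most likely workhorse and the part I would write out carefully.
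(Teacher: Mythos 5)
Your proposal is correct and follows essentially the same route as the paper: expand $(\ESym_n^k)^d$ via the variable-partition identity and the multinomial theorem, observe that the resulting $Y$-side and $Z$-side families $\{e_0^{t_0}(Y)\cdots e_k^{t_k}(Y)\}$ and $\{e_k^{t_0}(Z)\cdots e_0^{t_k}(Z)\}$ are each linearly independent by the algebraic independence of elementary symmetric polynomials, and conclude the Nisan matrix has rank $\binom{k+d}{k}$ since the nonzero multinomial coefficients (characteristic zero) make the matrix, in these bases, diagonal with nonzero entries. The only cosmetic differences are that the paper takes $|Y|=k$ rather than $n/2$ (so no later restriction step is needed), and your phrase ``diagonal-dominant'' should just read ``diagonal with nonzero entries.''
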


\symbackward*

\begin{proof}
    Let \(k = \lfloor n/2 \rfloor\).
    Consider the polynomial $f(x_1, \ldots, x_{n}) = x_k^k$. It is easy to see that $f$ can be computed by an $\RO$ of constant width.  However, by \autoref{lem:ESym-power}, any $\RO$ computing the symmetrisation
    $$
        f_{\sym} = f\big(\ESym^1_{n}, \ldots, \ESym^{n}_{n}\big) = \bigg(\ESym_{n}^k\bigg)^k
    $$
    must have width at least 
    \begin{equation*}
        \binom{2\lfloor n/2 \rfloor}{\lfloor n/2 \rfloor} = \Omega\left( 2^{n} / \sqrt{n} \right). \qedhere
    \end{equation*}
\end{proof}

\begin{remark}
    We recall that $\ESym_{n}^{\lfloor n/2 \rfloor}$ can be expressed as a sum of $n$ many products of univariate polynomials (see \autoref{prop:esym-produni}). Consequently, using the multinomial theorem, it follows that $(\ESym_{n}^{\lfloor n/2 \rfloor})^{\lfloor n/2 \rfloor}$ can be expressed as a sum of at most $O(2^{1.5n})$ many products of univariate polynomials. Hence, the bound we obtain in \autoref{thm:sym-backward} is almost optimal. 
\end{remark}
    
\begin{proof}[Proof of \autoref{lem:ESym-power}]
    Assume that $\left(e_k\right)^d$ is computed by an \(\RO\) of width \(w\) and variable order \(\pi\).
    Let $Y = \{x_{\pi(1)}, \ldots, x_{\pi(k)}\}$ and $Z = X \setminus Y$ be a partition of the variables $X$. By \autoref{thm:evaldim-ro}, we know that \[w \ge \evalDim_{Y,Z}\left((e_k)^d\right).\]

    Using \autoref{lem:esym-varpartition}, and the multinomial theorem, we can write the powers of the elementary symmetric polynomial $e_k$ as follows:
    \begin{align}
        \left(e_k(X)\right)^d &\; =\;  \left( \sum_{t=0}^k \, e_t(Y) \cdot e_{k-t}(Z) \right)^d \notag\\
                 &\;=\; \sum_{\substack{t_0 + \cdots + t_k = d \\ t_i \geq 0 }} \,\binom{d}{t_0,...,t_k} \, \bigg(e^{t_0}_{0}(Y) \cdots e^{t_k}_{k}(Y)\bigg) \cdot \bigg(e^{t_0}_{k-0}(Z) \cdots e^{t_k}_{k-k}(Z) \bigg).\label{eq:convolution}
    \end{align}

    To argue about the evaluation dimension of $(e_k(X))^d,$ we will need the following elementary fact from linear algebra.

    \begin{fact}
        \label{fac:lin-alg}
        If a matrix $M = \sum_{i=1}^r u_i v_i^T$ where $\{u_1,\ldots, u_r\}$ and $\{v_1,\ldots, v_r\}$ are linearly independent sets of vectors, then $M$ has rank exactly $r.$
    \end{fact}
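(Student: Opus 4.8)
The plan is to rewrite $M$ in factored form and then invoke the standard rank inequalities for matrix products. Let $U$ be the matrix whose columns are $u_1,\dots,u_r$ and let $V$ be the matrix whose columns are $v_1,\dots,v_r$; then $M = U V^{T}$, since $\sum_{i=1}^r u_i v_i^{T}$ is exactly the product of $U$ with $V^{T}$. The hypothesis says precisely that $U$ and $V$ each have full column rank $r$.

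For the upper bound, I would observe that every column of $M = U V^{T}$ is an $\F$-linear combination of the columns of $U$, so the column space of $M$ is contained in the column space of $U$, which has dimension $r$ by linear independence of the $u_i$. Hence $\rank(M) \le r$.

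For the matching lower bound, I would argue via the kernel, claiming $\ker(M) = \ker(V^{T})$. The inclusion $\ker(V^{T}) \subseteq \ker(M)$ is immediate from $M = U V^{T}$. Conversely, if $Mx = 0$ then $U(V^{T}x) = 0$, and since the columns of $U$ are linearly independent, $U$ is injective, which forces $V^{T}x = 0$. Now $\rank(V^{T}) = \rank(V) = r$, so by rank--nullity $\dim \ker(V^{T}) = N - r$, where $N$ is the number of columns of $V^{T}$; hence $\rank(M) = N - \dim\ker(M) = N - (N - r) = r$, and combining with the upper bound gives $\rank(M) = r$ exactly.

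I expect essentially no obstacle here: this is a routine linear-algebra fact. The only point worth care is that the argument should stay purely at the level of ranks, images, and kernels and use no inner-product structure, so that it remains valid over an arbitrary field $\F$ (in particular the characteristic-zero fields of the surrounding lemma); one should in particular avoid any argument routed through invertibility of $V^{T}V$, which would require positive-definiteness. An equally short alternative route: since $V^{T}$ has full row rank it admits a right inverse $W$ with $V^{T}W = I_r$, whence $MW = U V^{T} W = U$ and therefore $\rank(M) \ge \rank(MW) = \rank(U) = r$.
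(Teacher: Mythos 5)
Your proof is correct and complete. Note that the paper itself offers no proof of this statement: it is introduced inside the proof of the powering lower bound as ``the following elementary fact from linear algebra'' and simply invoked. Your argument --- factoring $M = UV^{T}$, bounding the column space by that of $U$, and identifying $\ker(M)$ with $\ker(V^{T})$ via injectivity of $U$ --- is the standard justification, stays at the level of ranks and kernels so it is valid over an arbitrary field, and fully discharges what the paper leaves implicit.
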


    To use the above fact, we note that the algebraic independence  of the elementary symmetric polynomials (a consequence of \autoref{thm:fundamental-sym}) implies that the sets
    \[
    E = \{e_0^{t_0}(Y)\cdots e_k^{t_k}(Y): t_0+\cdots + t_k = d\} \text{ and } \tilde{E} = \{e_{k-0}^{t_0}(Z)\cdots e_{k-k}^{t_k}(Z): t_0+\cdots + t_k = d\}
    \]
    are both linearly independent sets of polynomials. Further, each term on the right-hand side of \autoref{eq:convolution} (corresponding to a tuple $(t_0,\ldots, t_k)$ summing to $d$) has an evaluation matrix  that is the outer product of the coefficient vectors of the corresponding polynomials in $E$ and $\tilde{E}$, scaled by a suitable multinomial coefficient (which is non-zero because we have assumed that the characteristic of the underlying field is $0$). This implies that the evaluation matrix of $(e_k(X))^d$ has rank exactly the number of terms which is $\binom{k+d}{k}.$
\end{proof}

\section{Non-closure corollaries for \texorpdfstring{$\RO$}{RO}}
\label{sec:corollaries}

We will now give the proofs of corollaries stated in \autoref{sec:main-results}. We use observations from the earlier sections to show that operations such as powering, computing resultant, and discriminant can be hard for $\RO$.

\subsection{Hardness of Powering: a second example}
\label{sec:hard-power}
In this section, we give a second proof of (a slightly weaker form of) \autoref{cor:power-roabp}. Note that we already proved this in the form of \autoref{lem:ESym-power}. By \autoref{prop:esym-produni} and the following remark, we know that $\ESym_{2n}^{n}$ admits an $\RO$ of width $O(n)$ in any variable order. On the other hand, any $\RO$ that computes $(\ESym_{2n}^{n})^d$ must have width at least $\binom{n + d}{n}$. 

Inspired by the graph-based polynomial which was used to prove factor non-closure in \autoref{sec:factor-non-closure}, we can even define a quadratic polynomial $Q$ and prove that powering $Q$ is hard for this polynomial $\RO$. The lower bound we obtain is slightly weaker, but the example is even simpler since $Q$ is just a quadratic polynomial, as opposed to the high-degree and high-sparsity elementary symmetric polynomial.

\begin{corollary}[Variant of \autoref{cor:power-roabp}]
\label{cor:power-roabp-expander}
    The following holds over fields of characteristic zero.
    There exists an $n$-variate quadratic polynomial $Q$ computable by an $\RO$ of width $O(n)$ such that for any $d$, any $\RO$ computing $Q^d$ requires width at least $\binom{d + m}{m}$ in every variable order where $m  = \Omega(n).$
\end{corollary}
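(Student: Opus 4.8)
The plan is to take for $Q$ the quadratic ``edge polynomial'' of the constant-degree expander $G_n = (V,E)$ supplied by \autoref{lem:expander-matching}, namely $Q \coloneqq 1 + \sum_{(i,j)\in E} x_i x_j$. Since $G_n$ has constant degree, $|E| = O(n)$, so $Q$ is a sum of $O(n)$ monomials; hence every Nisan matrix of $Q$ has at most $O(n)$ nonzero entries, and by \autoref{thm:roabp-char} the polynomial $Q$ has an $\RO$ of width $O(n)$ in every variable order. This settles the easy direction; the work is in lower bounding $Q^d$.

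For the lower bound, fix an arbitrary variable order $\pi$ and set $Y = \{x_{\pi(1)},\dots,x_{\pi(\lceil n/2\rceil)}\}$ and $Z = X \setminus Y$. Viewing $(Y,Z)$ as a balanced partition of $V$, \autoref{lem:expander-matching} yields an induced matching $\calM = \{(y_i,z_i) : i \in [m]\}$ with $y_i \in Y$, $z_i \in Z$ and $m = \Omega(n)$. Now restrict $Q$ by substituting $0$ for every variable not covered by $\calM$. Because $\calM$ is an \emph{induced} matching, the only edges of $G_n$ among the $2m$ matched vertices are those of $\calM$ itself, so the restriction is exactly $\tilde Q = 1 + \sum_{i\in[m]} y_i z_i$, and $\tilde Q^d$ is the corresponding restriction of $Q^d$. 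Exactly as in the proof of \autoref{lem:factor-roabp}, substituting $0$ for variables can only decrease the rank of a Nisan matrix, so $\rank\bigl(M_{Y,Z}(Q^d)\bigr) \ge \rank\bigl(M_{Y,Z}(\tilde Q^d)\bigr)$.

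Next I would analyze $M_{Y,Z}(\tilde Q^d)$ directly, in the spirit of \autoref{lem:ESym-power}. By the multinomial theorem, $\tilde Q^d = \sum_{t_0 + t_1 + \cdots + t_m = d} \binom{d}{t_0,\dots,t_m}\prod_{i\in[m]} (y_i z_i)^{t_i}$, so every monomial of $\tilde Q^d$ has the ``matched'' shape $\bigl(\prod_i y_i^{t_i}\bigr)\cdot\bigl(\prod_i z_i^{t_i}\bigr)$. Consequently the row of $M_{Y,Z}(\tilde Q^d)$ indexed by $\prod_i y_i^{t_i}$ has a single nonzero entry, in column $\prod_i z_i^{t_i}$, equal to the multinomial coefficient $\binom{d}{\,d-\sum_i t_i,\, t_1,\dots,t_m}$, which is nonzero since the field has characteristic $0$. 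Distinct exponent tuples $(t_1,\dots,t_m)$ with $\sum_i t_i \le d$ index distinct nonzero rows and distinct nonzero columns, so these entries form a diagonal submatrix of full rank; therefore $\rank\bigl(M_{Y,Z}(\tilde Q^d)\bigr) = \binom{d+m}{m}$. Feeding this into \autoref{thm:roabp-char} gives the claimed width lower bound for $Q^d$ in the order $\pi$, and since $\pi$ was arbitrary, we are done.

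I do not anticipate a real obstacle, as the argument closely parallels \autoref{lem:factor-roabp} and \autoref{lem:ESym-power}. The two points needing care are (i) invoking the \emph{induced}-ness of $\calM$ so that the restriction picks up no spurious cross terms $y_i z_j$, $y_i y_j$ or $z_i z_j$ that would break the diagonal structure, and (ii) the characteristic-$0$ hypothesis, needed to ensure the multinomial coefficients do not vanish. Both safeguards already appear in the analogous arguments earlier in the paper, so the main task is simply to assemble them.
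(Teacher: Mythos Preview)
Your proposal is correct and follows essentially the same approach as the paper: define $Q$ as the quadratic edge polynomial of the expander $G_n$, pass to the induced matching in any balanced cut, and read off a diagonal submatrix of the Nisan matrix of $\tilde Q^d$. The one cosmetic difference is that you include a constant term, taking $Q = 1 + \sum_{(i,j)\in E} x_i x_j$, whereas the paper uses $Q_G = \sum_{(i,j)\in E} x_i x_j$; your extra $+1$ means the diagonal submatrix is indexed by all tuples $(t_1,\dots,t_m)$ with $\sum_i t_i \le d$, yielding $\binom{d+m}{m}$ directly, while the paper counts tuples with $\sum_i t_i = d$ to get $\binom{d+t-1}{t-1}$ and then absorbs the off-by-one into the unspecified $m = \Omega(n)$.
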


\begin{proof}
    Let $G = (V,E)$ be a constant degree graph on $n$ vertices such that \autoref{lem:expander-matching} holds. Define the quadratic polynomial:
    \begin{align}
        Q_G \;=\; \sum_{(i,j) \in E} x_i x_j \label{eq:expander-poly-quadratic}
    \end{align}
    where variables $x_i$ correspond to the vertices of $G$. It is easy to observe that $Q_G$ can be computed by an $\RO$ of width $|E| = O(n)$ in any variable order. We will prove that any $\RO$ computing $Q_G^d$ must have large width.
    
    Let $\pi$ be any variable order on the variables $X = \{x_1, \ldots, x_n\}$, and consider the partition $Y = \{x_{\pi(1)}, \ldots, x_{\pi(n/2)}\}$ and $Z = X \setminus Y$. Let $Y$ and $Z$ also denote the partition of vertices on $G$. By \autoref{lem:expander-matching}, there exists an induced matching $\calM$ between $Y$ and $Z$ of size $\Omega(n)$. By renaming the variables if necessary we assume that the matching is between the vertices corresponding to $y_i$ and $z_i$ where $i\in [t]$  and $t = \Omega(n)$. 
    
    Define the polynomial:
    \begin{align*}
        \tilde{Q}^{\,d} \;=\; \left(\sum_{(i,j) \in \calM} x_i \cdot x_j\right)^d = \left(\sum_{i \in [t]} y_i \cdot z_i\right)^d.
    \end{align*}
     In particular, ${\tilde Q}^{\,d}$ is obtained from $Q_G^d$ by setting to zero the variables which are not in the matching $\calM$. Hence, the rank of the Nisan matrix corresponding to $\tilde{Q}$ is a lower bound on the evaluation dimension of $Q$ w.r.t. the partition $(Y,Z)$. 
    
    By construction, for every monomial $m_Y$ of degree exactly $d$ over $Y$, there exists a unique monomial $m_Z$ over $Z$ such that the coefficient of $m_Y \cdot m_Z$ in $\tilde{Q}^{\,d}$ is nonzero (cf.~\autoref{ob:rochar-example}).
    Therefore,
    $$
        \rank\left(M_{Y,Z}(Q_G^d)\right) \;\geq\; \rank\left(M_{Y,Z}\left(\tilde{Q}^{\,d}\right)\right) \;\geq\; \binom{d + t-1}{t-1}.
    $$
    Applying \autoref{thm:roabp-char}, we obtain the desired lower bound on the width of any $\RO$ computing $Q_G^d$.
\end{proof}

\subsection{Hardness of computing resultant and discriminant}

We design a polynomial that is simple for $\RO$, but which turns out to be difficult for $\RO$ when one computes its discriminant.
This, in turn, immediately implies that computing resultant is also hard for $\RO$.

\discriminant*
\begin{proof}
    Let $g$ be an $(n-1)$-variate polynomial to which the lower bound in \autoref{cor:power-roabp} is applicable. 
    Fix any $d = \Omega(n)$. Define $$f \coloneqq y^d - g(\vecx) \cdot y.$$
    Then we have $\partial_y f = d\cdot y^{d-1} - g$.
    It is easy to see that the roots of $f$ are $\alpha_0 = 0$ and $\alpha_i = \omega^i \cdot g^{1/(d-1)}$ for $1 \leq i \leq d-1$, where $\omega$ is a primitive $(d-1)$-th root of unity. Here we work over a suitable field extension of the base field $\F$ to ensure that we have an $(d-1)$-th root of unity.

    The discriminant of $f$ is defined as the resultant of $f$ and $\partial_y f$ with respect to $y$, i.e., $\operatorname{Disc}_y(f) = \operatorname{Res}_y(f, \partial_y f)$. Then using \autoref{ob:res-product} we can compute:
    \begin{align*}
        \operatorname{Disc}_y(f) \;&=\;\prod_{i = 0}^{d-1} \partial_y f(\alpha_i)
        \;=\; -g \cdot \prod_{i = 1}^{d-1} \left( d -1 \right) \cdot g\\
        \;&=\; -(d-1)^{d-1} g^d.
    \end{align*}

    Thus, computing the discriminant of $f$ amounts to powering the polynomial $g$, for which we have the required lower bound by \autoref{cor:power-roabp}. The upper bound on the $\RO$ complexity of $f$ follows from the one for $g.$
\end{proof}

\section{Conclusions and Open Problems}

In this work, we proved that a width-$w$ $\RO$ computes a polynomial whose irreducible factor requires $\RO$s of width at least $w^{\log d}$, yielding a quasipolynomial separation. This showed that $\RO$s are not closed under factoring (see \autoref{sec:factor-non-closure}). A natural next step is to search for polynomials that exhibit an exponential separation between the $\RO$ complexity of a polynomial and that of its factor. 

Our non-factor closure proof relied on the idea that polynomials that are hard for the simpler sparse model but easy for $\RO$s can be transformed, using simple gadgets, into polynomials that are hard even for $\RO$s. This raises an intriguing question about the scope of such hardness lifting. Specifically, given a polynomial $f(\vecx)$ of sparsity $s$, can we always lift using a gadget $\phi$ such that the composed polynomial $f(\phi \circ \vecx)$ requires an $\RO$ of size $\Omega(s)$?

One consequence of our study of graph-based polynomials and symmetric compositions is the proof that powering is hard for $\RO$s (see \autoref{sec:hard-power}). This naturally raises the question in the other direction: does there exist a polynomial $f \coloneqq g^d$ that is easy to compute by an $\RO$, while $g$ is hard for $\RO$? 
An affirmative answer would, once again, stand in sharp contrast to other models such as circuits, algebraic branching programs, and formulas, where low complexity of $f$ leads to low complexity of $g$.
Interestingly, the analogous question for sparse polynomials was answered affirmatively for $d = 2$ in classical works by R{\'{e}}nyi~\cite{Renyi1951} and Erd{\H{o}}s~\cite{Erdos1949}, and was subsequently extended to arbitrary $d$ in later works~\cite{Ver1949, CD1994}.

\paragraph*{Acknowledgments.} \textit{The research in this paper was carried out during a visit of some of the authors to Université Savoie Mont Blanc. The authors thank the Laboratoire de Mathématiques (LAMA) at Université Savoie Mont Blanc, which is supported by the AAP GAFA project, for their hospitality and the conducive environment for research.}

\printbibliography

\end{document}